\newcommand{\R}{\mathbb{R}}
\newcommand{\A}{\ensuremath{\mathbb{A}}\xspace}
\newcommand{\B}{\ensuremath{\mathbb{B}}\xspace}
\newcommand{\X}{\ensuremath{\mathbb{X}}\xspace}
\newcommand{\re}{\mathrm{e}}
\newcommand{\rd}{\mathrm{d}}
\newcommand{\Ncal}{\mathcal{N}}
\newcommand{\xs}{x^*}%{\ensuremath{x^*}}
\newcommand{\xmin}{\ensuremath{x_{\mathrm{min}}}}
\newcommand{\order}[1]{\mathcal{O}\left(#1\right)}
\newcommand{\bydef}{:=}
\newcommand{\ba}{\mathbf{a}}
\newcommand{\bb}{\mathbf{b}}
\newcommand{\smallo}[1]{\ensuremath{\mathrm{o}\!\left(#1\right)}}
\newcommand{\N}{\mathbb{N}}
\newcommand{\T}{\mathbb{T}}
\newcommand{\II}{\mathcal{I}}
\newcommand{\Vgap}[1]{\mathfrak{G}(#1)}
\newcommand{\Msep}[1]{\mathfrak{M}(#1)}
\newtheorem{lemma}{Lemma}
\newtheorem{theorem}{Theorem}
\newtheorem{corollary}{Corollary}
\newtheorem{remark}{Remark}
\newtheorem{definition}{Definition}
\title{Dynamical compatibility for finite and infinite population models used in genetics}
\author[1]{Fabio A. C. C. Chalub\footnote{Corresponding author: facc@fct.unl.pt}}
\author[2]{Max O. Souza\footnote{maxsouza@id.uff.br}}
\affil[1]{Center for Mathematics and Applications (NOVA Math) and Department of Mathematics, NOVA School of Science and Technology, Universidade NOVA de Lisboa, Quinta da Torre, 2829-516, Caparica, Portugal.}
\affil[2]{Instituto de Matemática e Estatística, Universidade Federal Fluminense, R. Prof. Marcos Waldemar de Freitas Reis, s/n, Bloco H, Niterói, RJ 24120-201, Brasil}
\date{\today}
\begin{document}

\maketitle

\begin{abstract}
    Finite and infinite population models are frequently used in population dynamics. However, their interrelationship is rarely discussed. 
    
    \textbf{Relevance to life sciences:} In this work, we examine the limits of large populations of the Moran process (a finite-population birth-death process) and the replicator equation (an ordinary differential equation) as paradigmatic examples of finite and infinite population models, respectively, both of which are extensively used in population genetics. Except for certain degenerate cases, we completely characterize when these models exhibit similar long-term dynamics, i.e., when there is a one-to-one relation between the stable attractors of the replicator equations and the metastable states of the Moran process.  We will discuss our findings and the implications for an improved understanding of the famous Lewontin paradox.
    
    \textbf{Mathematical content:} To achieve this goal, we first show that the asymptotic expression for the fixation probability in the Moran process, when the population size is large and individual interaction is almost arbitrary (including cases modeled through $d$-player game theory), is a convex combination of the asymptotic approximations obtained in the constant fitness case or 2-player game theory. We discuss several examples and the inverse problem, i.e., how to derive a Moran process that is compatible with a given replicator dynamics. In particular, we prove that modeling a Moran process with an inner metastable state may require the use of $d$-player game theory with possibly large $d$ values, depending on the precise location of the inner equilibrium.  We finish by discussing possible connections between the newly introduced concepts and classical concepts of quasistationarity and metastability in the theory of stochastic processes.
\end{abstract}

\textbf{Keywords}: fixation probability, Moran process, replicator equation, asymptotic expansions, metastability.

\textbf{MSC}: 92D25, 60J20, 41A60, 65D15.

\section{Introduction}

A real population can be so large that, even if we fully understood how individuals interact, it would be impractical to simulate its evolution numerically. In this scenario, there are two viable approaches: using infinite population models or conducting simulations of much smaller populations and extrapolating the results to realistic population sizes. However, the relationship between these two fundamentally different approaches remains unclear.

The main question we will study in this work is to what extent infinite population models serve as approximations for models with finite but large populations. More specifically, we will consider two well-known classes of models that describe the evolution of a population composed of two types of individuals. The first is the Moran process, a constant population birth-death process that allows some important exact formulas~\cite{Moran}. The second is the replicator equation, a first-order differential equation introduced in \cite{TaylorJonker_1978}.

In both cases, the resulting dynamics will strongly depend on the reproductive viability of both types of individuals in all possible configurations of the population, which we will call from now on \emph{population states}, or, simply, \emph{states}. We will henceforth assume one type as \emph{focal} (type \A), the other being called \B. In the Moran process, population size is constant, denoted by $N\in\N$, and the state of the population will be identified by $i\in\{0,\dots,N\}$, the number of type \A individuals in the population, or $x\in[0,1]$, its fraction. The latter will also be used to identify the state of the population in the replicator equation.  

The reproductive viability, or \emph{fitness}, is commonly calculated using game theory to describe individual interactions, being equal to the average game payoff. Although fitness can be defined for each population state without direct reference to individual interactions, the application of game theory provides valuable insights into how individuals interact with each other and the consequences of these interactions at the population level. 

For the finite population level, it is important to discuss how results for relatively small populations, i.e., composed by $N\approx 10^5-10^8$ individuals, and that allows direct numerical simulations, scale up to realistically large values of $N\approx 10^{12}$. More particularly, a topic of interest in the present work is how these parameters change as the population size $N$ increases, so that the infinite population limit is consistent. The relation between the parameters is called \emph{scaling laws}~\cite{ChalubSouza:TPB2009}. 

We will consider a function \( V \), referred to as the \emph{fitness potential}, which is defined for all possible population states, such that its derivative equals, minus the sign, the fitness differences among the types~\cite{ChalubSouza:JTB2018}. When \( V \) is a polynomial of degree \( d \), the individual interaction corresponds to the payoffs of a \( d' \)-player game, $d'\ge d$, with strict inequality in degenerated cases. We say that a $d$-player game is \emph{degenerated} if its dynamics is equivalent to a $d'$-player game, with $d'<d$; see~\cite{ChalubSouza:DGA2025} for an explanation of the case $d'=2$. For further details on game theory, see~\cite{Gintis}; the relationship between game theory and the replicator equation is discussed in~\cite{HofbauerSigmund}, while for the Moran process, cf.~\cite{Nowak:06}.

We will introduce a set of scaling assumptions that guarantee that the replicator equation is the formal limit of the Moran process. In the process of deriving a continuous approximation for the large-population Moran process, two important model parameters need to be constrained to the population size. 

The first is how the typical birth and death time of the Moran process, $\Delta t>0$, decreases as $N$ increases, while the second is the so-called \emph{weak selection principle}. Assuming, for a second, that $\Delta t$ is inversely proportional to $N$, then the renewal of the entire population will be size independent, and therefore the large population limit of the process will include, in a well-balanced way, both the natural selection and the genetic drift. In order to focus on natural selection --- as does the replicator equation --- it is necessary to increase the population size faster than in the balanced case discussed above, or, equivalently, decrease the time step slower than $N^{-1}$, as $N$ goes to infinity.

The second critical assumption, the \emph{weak selection principle}, says that when the time step in the Moran process is small --- meaning the population size is large --- the fitness difference between the types approaches one. 

For several aspects of the weak selection principle, cf.~\cite{Kimura1983,OhtaPNAS_2002,WildTraulsen_2007,WuTraulsen_2010} and for the dependence of the limiting process in the scalings, see~\cite{ ChalubSouza:TPB2009,Chalubsouza:JMB2014,MckaneWaxman_2007,TraulsenClaussenHauert_2005}.

With these assumptions on hand, we will develop the large population limit models of the Moran process and demonstrate that its relationship to the replicator equation is not as strong as one might expect. Specifically, we will show that some of the mixed stable states in the replicator equation, where both type fitnesses are equal, do not necessarily correspond to metastable states in the Moran process, even when the population size is large. We will be able to fully characterize which equilibrium points of the replicator equation are associated with metastability in the Moran process for large populations in terms of the potential \( V \). These specific equilibria will be referred to as \emph{essentially metastable states}, or EMS, generalizing to mixed cases in large populations the \(\mathrm{ESS}_N\), originally introduced for homogeneous populations in~\cite{Taylor_etal:BMB2004}.

Contrary to what one may expect from the use of the word \emph{metastability} in the previous paragraph, the focus of the present study is not the time-evolution of any of these processes, but the comparison of fixation probability of the focal time in the long run of the Moran process \emph{versus} the basin of attraction of the replicator dynamics. From the no-mutation assumption of the Moran process, \emph{like begets like}, it is not difficult to prove that, given enough time, the population will become homogeneous~\cite{ChalubSouza:JMB2017,Moran,Nowak:06}. The fixation probability, which depends on the initial condition, is the probability that the final state is composed only of individuals of the focal type. In this case, we say that the focal type \emph{fixated}; otherwise, we say that it became \emph{extinct}. When the dependence of the fixation probability is weakly dependent on the initial condition, for a given region, there was a loss of memory of the initial condition before fixation or extinction (in short, \emph{absorption}), indicating the existence of a metastable attractor in which both types are present. Only after this first time scale, dominated by the replicator dynamics, did the stochastic process move to either absorbed states.  Similarly, in the replicator dynamics, the basin of attraction of any possible stable equilibrium $X_*\in[0,1]$ is the set of initial conditions that guarantee that in the long run the state of the population will converge to $X_*$.

On the contrary, let a given equilibrium $X_*$ of the replicator equation be unstable, i.e., small perturbations of the initial condition lead to very different final states. In the Moran process, this corresponds to discontinuities (sharp variations) in the fixation probability, as closely related initial conditions lead to very different final states.

The main question that we will deal with is: what are the conditions on the parameters of the finite population Moran process, including fitness and scaling laws, such that when we scale up the population size to infinity i) fixation and extinction in the finite population implies attraction by the boundaries of the domain, $X_*=1$ and $X_*=0$ in the replicator equations, respectively, ii) plateaux in the fixation probability suggest the existence of interior stable equilibria $X_*\in(0,1)$ of the replicator dynamics, and iii) discontinuities of the fixation probability relate to interior unstable equilibria of the replicator dynamics?

With that problem in mind, we will start to show that the asymptotic behavior in the population size $N$ of the fixation probability for very general interactions in the Moran process, including the ones given by $d$-player game theory, can be expressed as a convex combination of two different functions: the first function arises from the fixation probability in cases with constant fitness differences (one-player games), while the second comes from coordination games in two-player game theory. 

After providing an almost complete answer to our main question, in which only degenerated cases will not be worked out in detail, we will discuss the inverse problem, i.e., we will explicitly construct fitness functions that, when implemented in the Moran process, make its dynamics similar to those in a given replicator dynamics. However, despite the simplicity in the mathematical problem, we show that unexpected results in designing finite population two-player games arise in games as simple as the hawk-dove game when the inner equilibrium $\xs\in(0,1)$ is such that $\xs\ne1/2$. In particular, looking for hawk-dove games with non-symmetric inner equilibrium in the finite population case, we cannot model individual interaction using two-player game theory, as is customary in the infinite population setting.

We finish this introduction with an outline of the present work.
In Section \ref{sec:definitions}, we introduce our main definitions, including the Moran process, the replicator equation, and the fitness potential, as well as their relation to game theory and their relevant properties. We also present the scaling assumptions, demonstrating the formal compatibility of these two models, see also Appendix~\ref{ap:formal}.
In Section \ref{sec:twocases}, we find asymptotic expressions for two fundamental examples of the Moran process: the constant fitness case, which corresponds to a one-player game, and the coordination case from two-player game theory.  
These results are foundational in Section \ref{sec:dplayer}, where we provide an asymptotic expression for the fixation probability in the limit of large effective population sizes of general potentials, which includes $d$-player games as particular cases. In particular, we show that the fixation probability approaches a convex combination of one-player dominance games and two-player coordination games, allowing a full characterization of Moran process and replicator system that are compatible. As a consequence, we will find a definition of finite population metastability that generalizes the well-known $\mathrm{ESS}_N$ condition.
All these concepts will be explored in the examples section, Section \ref{sec:examples}. In Section \ref{sec:inverse}, we will demonstrate how to construct a potential such that its associated Moran process is compatible with a given replicator equation. 
Our conclusions will be presented in Section \ref{sec:conclusions}.

\section{Processes in population genetics}
\label{sec:definitions}

The Moran process is a birth-and-death process, in which at each time step one individual is selected to die with equal probability, and one is selected to reproduce, possibly the same one, with probability proportional to a given function called \emph{fitness}~\cite{Moran,Nowak:06,Taylor_etal:BMB2004}. More precisely,
\begin{definition}
    Consider a population of size $N$, composed of individuals of two types, \A and \B. The state of the population is given by the presence of the focal type \A, $i\in\{0,\dots,N\}$. Let  $\Psi^{(\A)}_N,\Psi^{(\B)}_N:\{0,\dots,N\}\to\R_+$ be the fitness functions of type \A and \B, respectively. Whenever possible, we will omit the subindex $N$. The transition matrix from state $i$ to $j$ of the Moran process is given by
    \[
    M_{ij}=\left\{\begin{array}{ll}
    \frac{i(N-i)\Psi^{(\A)}(i)}{N(i\Psi^{(\A)}(i)+(N-i)\Psi^{(\B)}(i))} \ ,&\quad j=i+1\ ,\\
    \frac{i(N-i)\Psi^{(\B)}(i)}{N(i\Psi^{(\A)}(i)+(N-i)\Psi^{(\B)}(i))} \ ,&\quad j=i-1\ ,\\
    1-M_{i,i+1}-M_{i,i-1}\ ,&\quad j=i\ ,\\
    0\ ,&\quad\text{otherwise}\ .
    \end{array}\right.
    \]
    From now on, we refer to $i$ as the \emph{state} of the population.
\end{definition}

In the long run, the population will be homogeneous, and one of the central questions in evolutionary dynamics is to obtain the probability $F_i$ that the population consists eventually only of elements of the focal type, given that its initial state is $i$, the so called \emph{fixation probability}. For the Moran process, there is a well-known closed-form expression~\cite{AntalScheuring_06,Nowak:06}:
\begin{equation}\label{eq:Fk}
F_0=0\ ,\qquad F_i=\frac{\sum_{j=1}^i\prod_{k=0}^{j-1}\rho_k^{-1}}{\sum_{j=1}^N\prod_{k=0}^{j-1}\rho_k^{-1}}=
\frac{1+\sum_{j=1}^{i-1}\prod_{k=1}^j\rho_k^{-1}}{1+\sum_{j=1}^{N-1}\prod_{k=1}^i\rho_k^{-1}}\ ,\quad i=1,\dots,N\ ,
\end{equation}
where the \emph{relative fitness} is given by
\[
\rho_{i}\bydef\frac{\Psi^{(\A)}(i)}{\Psi^{(\B)}{(i)}}\ .
\]
We use the convention that $\sum_{i=1}^0=0$.

The fact that $F_0=0$ and $F_N=1$ highlights that the Moran process does not include mutations. 
The second expression for $F_i$ makes it explicit that the fixation probability does not depend on the values of $\rho_0$ and $\rho_N$, i.e., fitnesses are defined arbitrarily in homogeneous populations.

One important assumption in evolutionary dynamics is the so-called \emph{weak-selection principle}, which states that the difference in fitness between the two types only becomes apparent in evolution if the observation time is sufficiently long. The precise connections between the process time step, the population size, and how the fitness function approaches one as the population increases constitute the \emph{scaling relations} and are critical to mathematically defining the large population limit. 

\begin{definition}\label{def:wsp}
We say that the family of function $\Psi^{(\A)}_N$ and $\Psi^{(\B)}_N$ satisfy the \emph{weak selection principle} if there is a continuous functions $\omega:\R\to\R$, such that $\lim_{x\to\infty}\omega(x)=0$, $N_0\in\N$, and there are bounded continuous functions $\psi^{(\A)},\psi^{(\B)}:[0,1]\to\R$ such that
\begin{equation}\label{eq:wsp}
\Psi^{(\X)}_N(i)=1+\omega\left(N\right)\psi^{(\X)}\left(\frac{i}{N}\right),\quad i=1,\dots,N-1\ ,\quad N\ge N_0\ ,\quad \X=\A,\B\ .
\end{equation}

We also define the \emph{intensity of selection}, or the \emph{inverse of the effective population size}, for the finite population Moran process $\kappa_N$ as

\begin{equation}\label{eq:kappaN}
\kappa_N\bydef\frac{2\|\psi^{(\A)}-\psi^{(\B)}\|_\infty}{N\displaystyle\max_{i\in\{0,1,\dots,N\}}|\log\rho_i|}\ , 
\end{equation}
where $\|f\|_\infty=\sup_{x\in(0,1)}|f(x)|$ is the $L^\infty$ norm of $f$ in $(0,1)$.
\end{definition}

Eq.~\eqref{eq:kappaN} follows from~\cite{ChalubSouza:JMB2016}, using a representation more akin to the use of the intensity of selection in the biological literature, and more convenient to the purposes of the present work. See~\cite{ChalubSouza:JMB2016} for further details.

For large $N$, the intensity of selection can be approximated by
\[
\kappa_N\approx\frac{2\|\psi^{(\A)}-\psi^{(\B)}\|_\infty}{N\max_i[\omega(N)(\psi^{(\A)}(i/N)-\psi^{(\B)}(i/N))]}=\frac{2}{N\omega(N)}\ .
\]

\begin{remark}\label{rmk:scalings}
    In~\cite{ChalubSouza:JTB2018}, parameters $\mu$ and $\nu$ were defined to consider the large population limit, one for the intensity of the selective term, $\omega(N)=\left(\Delta t\right)^{\nu}$, and a second one for the scaling relation $N^{-1}=\varepsilon\left(\Delta t\right)^\mu/2$, where $\Delta t$ is the typical renewal time for one individual and $\varepsilon>0$ is a immaterial constant. The replicator equation is the large population limit for $\nu=1-\mu$, $\mu\in(1/2,1)$. Therefore,   $\omega(N)=\left(\frac{2}{\varepsilon}\right)^{\nu/\mu}N^{-\nu/\mu}\propto N^{-(\mu^{-1}-1)}$, where the symbol ``$\propto$'' indicates ``proportional to''. See Appendix~\ref{ap:formal} for further details.
\end{remark}

In the sequel, we will assume that
\begin{equation}\label{eq:alpha}
\omega(N)=N^{-\alpha}\ ,\quad \alpha\in(0,1)\ ,
\end{equation}
what was termed in~\cite{ChalubSouza:JMB2016} a \emph{selection driven} regime. In this case, $\kappa_N\approx 2 N^{\alpha-1}\to0$ in the limit $N\to\infty$. Cases that will not be considered here include $\alpha>1$, the \emph{quasi-neutral} regime, and the boundary case $\alpha=1$ in which both the natural selection and the genetic drift are present in the \emph{balanced} regime. This last is equivalent to the use of the Kimura equation as the large population, cf.~\cite{ChalubSouza:TPB2009} for the Moran process  and~\cite{Chalubsouza:JMB2014} for the Wright-Fisher process, with different scaling relations. 

A convenient tool used in the foregoing analysis is the \emph{fitness potential} $V:[0,1]\to\R$, introduced in~\cite{ChalubSouza:JTB2018}, as a useful quantity to describe the short and long-term behavior of the system, in a similar way to the use of potentials in classical mechanics. Namely
\begin{equation}\label{eq:def_V}
V(x)\bydef-\int_0^x\left(\psi^{(\A)}(s)-\psi^{(\B)}(s)\right)\rd s\ .
\end{equation}
By definition, $V(0)=0$, but nothing will change if we add an arbitrary constant to $V$.

An important result is a straightforward adaptation of \cite[Theorem 1]{ChalubSouza:JMB2016}:
\begin{theorem}\label{thm:continuousfixation}
 Let    
\begin{equation}\label{eq:varphikappa}
\varphi_\kappa[V](x)\bydef\frac{\int_0^x \exp\left(\frac{2}{\kappa}V(s)\right)\,\rd s}{\int_0^1 \exp\left(\frac{2}{\kappa}V(s)\right)\,\rd s}\ ,
\end{equation}
where $V\in C^3([0,1])$  is a given fitness potential.
Assume that a finite population of size $N$ evolves according to the Moran process with fitness $\Psi^{(\A)}$ and $\Psi^{(\B)}$ that satisfies the weak selection assumption. If $1\gg\kappa_N\gg N^{-1/2}$ with $N$ such that $N\omega(N)\approx \frac{2}{\kappa_N}$, which implies $N\omega(N)^2\approx\frac{4}{N\kappa_N^2}\ll1$, then the fixation probability vector can be well approximated by $\varphi_{\kappa}[V]$ at the uniform discretization points. More precisely,
\[
    F_i = F_{i,N}=\varphi_{\kappa_N}[V](i/N) + \order{N^{-1}\kappa_N^{-2}}\ ,\quad i=0,1,\dots, N\ .
\]
\end{theorem}

If, instead of assuming weak selection in the form of Eq.~\eqref{eq:wsp}, we impose $\Psi^{(\X)}=\exp\left(\omega(N)\psi^{(\X)}\right)$, $\X=\A,\B$, a perturbation $\order{\omega(N)^2}$ of the previous expression, then the error term improves considerably and becomes $\order{N^{-2}\kappa_N^{-1}}$, if $V$ is an interior potential (i.e., if its maximum is attained only at the interior) and $\order{N^{-2}\kappa_N^{-2}}$ otherwise.

One of the questions that motivated the present work is to find precise conditions such that finite population and infinite population descriptions of the evolutionary process are consistent. In an infinite population setting, the most celebrated model is the replicator equation 
\begin{equation}\label{eq:replicator}
\dot X=-X(1-X)V'(X)\ ,
\end{equation}
where $X\in[0,1]$ is the fraction of individuals of focal type. See~\cite{HofbauerSigmund} for further details on the history and motivations of this particular model.

As a consequence of the absence of mutation, homogeneous populations, identified by $X=0$ and $X=1$ are always stationary solutions of the replicator equation. However, for different fitness potentials $V$, mixed stable equilibria appear at $X_*$ whenever $V'(X_*)=0$, $V''(X_*)>0$ for $X_*\in(0,1)$. 

Assuming that the replicator equation models the first time scale evolution of the Moran process, then we expect that for initial conditions in the basin of attraction of $X_*$, the Moran process will initially evolve towards $X_*$ and then will be absorbed by one of the boundaries. Therefore, we expect that the fixation probability of the Moran process $\varphi_{\kappa_N}(x)$ will weakly depend on $x$, as long as it is in the basin of attraction of the same point.

Following these ideas, we define

\begin{definition}
    \label{def:compatibility}
Let $\Psi^{(\A)}$, $\Psi^{(\B)}$ be fitness functions and consider the associated replicator equation and Moran process, in the weak selection approximation, with given scaling parameter $\alpha$, cf. Eq.~\eqref{eq:alpha}. Let $X_{X_0}(t)$ be the solution of the Replicator Equation~\eqref{eq:replicator} with initial condition $X(0)=X_0$. 
We say that both models are compatible if there is a non-decreasing function $f:[0,1]\to[0,1]$ such that $f(0)=0$, $f(1)=1$, $f(x)\in(0,1)$ for all $x\in(0,1)$ and for all $X_0\in[0,1]$ it is true that
\begin{equation}\label{eq:dominance_repphi}
 X_{X_0}(t)\xrightarrow{t\to\infty}X_\infty\Longleftrightarrow \varphi_{\kappa_N}(X_0)\xrightarrow{\kappa_N\to0} f(X_\infty)\ .
\end{equation}
\end{definition}

Finally, we introduce a class of fitness potentials of interest and two measures of $V$ that will be useful in the sequel.
\begin{definition}\label{def:Vgeneric}
    Let $V:[0,1]\to\R$ be a fitness potential that is at least twice differentiable. We say that $V$ is \emph{generic} if
    \begin{enumerate}
    \item The fitness difference is non-degenerated on the boundaries, i.e., $V'(0), V'(1)\ne0$.
    \item All zeros of $V'$ are non-degenerated, i.e, for all $x\in(0,1)$ such that $V'(x)=0$, then $V''(x)\ne0$.
    \end{enumerate}
    Let $\mathcal{M}\bydef\{x_0,x_1\dots,x_M\}\subset[0,1]$ be the set of $M+1$ strict local maxima of $V$ such that $i<j$ implies $x_i< x_j$. 
It is clear that $\mathcal{M}\ne\emptyset$. We define two different measures on $V$: if $\mathcal{M}$ is a singleton, then we declare  $\Vgap{V}\bydef+\infty$ and $\Msep{V}\bydef1$; otherwise:
\begin{enumerate}
\item 
    The gap between the two largest local maxima of $V$, or the $V$-gap:
    \[
    \Vgap{V}\bydef \min\left\{V(x_i)-V(x_j)\bigl| x_i\in\mathop{\mathrm{arg\,max}}V(x), x_j\in\mathcal{M}\backslash\{x_i\}\right\}\ge 0\ .
    \]
\item The separation between the local maxima of $V$, or the $M$-sep as
\[
\Msep{V}\bydef \min_{i\ne j}|x_i-x_j|\in(0,1]\ .
\]
\end{enumerate}
\end{definition}

\section{Fixation probability formulas for two fundamental cases}
\label{sec:twocases}

Non degenerated two-player games fall in one of the following three categories: either i)  the fitness difference does not change sign in the interval $[0,1]$, with positive (negative) difference corresponding to dominance by \A (\B, respectively); ii) fitness difference is positive at $x=0$ and negative at $x=1$, indicating that both types are able to invade an homogeneous population of the other strategist type; iii) fitness difference is negative at $x=0$ and positive at $x=1$, indicating that both \A and \B are unable to invade an homogenous population of the other strategist. The first case is known as \emph{dominance}, the second one as \emph{coexistence}, while the third one is the \emph{coordination} case~\cite{HofbauerSigmund}. As an abuse of language, we can define a one-player game, in which fitnesses are independent of the composition of the populations  --- frequency-independent fitness ---, i.e., they are constants, making this example a particular case of a dominance game.

In this section, we will study in detail the case of frequency-independent fitness and the coordination games. The results presented are not only interesting \textit{per se}, but they turn out to play an important role in the asymptotic analysis of the fixation probabilities of $d$-player games that we will be carried out in Sec.~\ref{sec:dplayer}.

We begin with the constant fitness case. Without loss of generality, we take $\Psi^{(\B)}_N\equiv1$ and hence, $\psi^{(\B)}\equiv0$; let $\psi^{(\A)}(x)=r$ and then we obtain that $V(x)=-rx$. On the other hand, let $\rho\bydef\frac{\Psi^{(\A)}(i)}{\Psi^{(\B)}(i)}=\Psi^{(\A)}(i)>0$.

From Eq.~\eqref{eq:Fk} and Def.~\ref{def:wsp}, we conclude that $\rho=1+\omega(N)r$ and, 
\[
\kappa_N=\frac{2r}{N\log\rho}=\frac{2r}{N\log(1+\omega(N)r)}\quad\Longleftrightarrow\quad
\omega(N)=\frac{1}{r}\left(\re^{\frac{2r}{\kappa_NN}}-1\right)\ .
\]
Therefore,
\[
F_{xN}=\frac{1-\rho^{-xN}}{1-\rho^{-N}}=\frac{1-(1+\omega(N)r)^{-xN}}{1-(1+\omega(N)r)^{-N}} {=}\frac{1-\re^{-2xr/\kappa_N}}{1-\re^{-2r/\kappa_N}}=\varphi_{\kappa_N}[V](x)\ .
\]

For $V(x)=-rx$, we introduce, for the sake of convenience,  the following notation:
\begin{align}
\label{eq:varphiA}
\varphi^{(\A)}_{\kappa}[r](x)&\bydef \varphi_{\kappa}[V](x)=\frac{1-\re^{-2xr/\kappa}}{1-\re^{-2r/\kappa}}\ ,\quad\text{if}\ r>0\ ,\\
\label{eq:varphiB}
\varphi_{\kappa}^{(\B)}[r](x)&\bydef 1-\varphi^{(\A)}_\kappa[-r](1-x)=\frac{\re^{-2|r|(1-x)/\kappa}-\re^{-2|r|/\kappa}}{1-\re^{-2|r|/\kappa}}\ ,\quad\text{if}\ r<0\ .
\end{align}
We shall see in Sec.~\ref{sec:dplayer} that these formulae continue to be valid for dominance games with any finite number of players. Eq.~\eqref{eq:varphiA} is a shorthand notation for the fixation probability of type \A, when \A is dominant (i.e., when $r>0$), while Eq.~\eqref{eq:varphiB} is a shorthand notation for the fixation probability of type \A when type \B is dominant, i.e., when $r<0$. The rationale behind Eq.~\eqref{eq:varphiB} is the following: the fixation probability of \A is equal to its extinction probability $1-\varphi_\kappa$ when we swap types \A and \B, and, therefore, $\Psi^{(\A)}$ and $\Psi^{(\B)}$, i.e, when we change $r$ to $-r$, and $x$ to $1-x$.

Before proceeding to the coordination case, we digress on the compatibility of one-player dominance games in the Moran and the replicator dynamics. On the one hand, the flow of the Replicator Equation~\eqref{eq:replicator} is such that for any initial condition $X_0\in(0,1)$, $X(t)\to 1$ or $X(t)\to0$ for dominance by \A or by \B, respectively. On the other hand, fixation of the dominant type is almost sure --- except when its initial presence is very small. Thus, we have that, for one-player dominance games, the replicator and the Moran process are compatible in the sense of Definition~\ref{def:compatibility}, with any continuous function $f$ in $[0,1]$ such that $f(0)=0$, $f(1)=1$. See also the left panel in Fig.~\ref{fig:basic_cases}.

For coordination two-player games, we recall the standard representation of a two-player game payoff matrix:  $\left(\begin{smallmatrix} a&b\\ c&d\end{smallmatrix}\right)$, with $a>c$ and $b<d$. We define $\gamma\bydef-\frac{a-c+d-b}{2}<0$ and $\xs\bydef\frac{d-b}{a-c+d-b}\in(0,1)$. The fitness potential is given by
\begin{equation}\label{eq:Vtwoplayer}
V(x)=\gamma(x-\xs)^2-\gamma(\xs)^2\ ,
\end{equation}
corresponding to the fitness difference of
\[
\Psi^{(\A)}(x)-\Psi^{(\B)}(x)=-V'(x)%=(a-c+d-b)\left(x-\frac{d-b}{a-c+d-b}\right)
%=(a-c+d-b)x-(d-b)
=(ax+b(1-x))-(cx+d(1-x))\ .
\]
We note that $V'(0)=d-b>0$,  $V'(1)=c-a<0$, and $V''(\xs)=2\gamma<0$.

The asymptotic expression of the fixation probability is given by $\varphi_\kappa(x)=\varphi_\kappa^{(\mathrm{C})}(x)+\order{\kappa_N^{2}}$, where
\begin{equation}
    \label{eq:varphiC}
    \varphi_\kappa^{(\mathrm{C})}[\xs,V''(\xs)](x)\bydef\frac{\Ncal\left(\sigma_\kappa^{-1}(x-\xs)\right)-\Ncal\left(-\sigma_\kappa^{-1}\xs\right)}{\Ncal\left(\sigma_\kappa^{-1}(1-\xs\right))-\Ncal\left(-\sigma_\kappa^{-1}\xs\right)},
\end{equation}
$\sigma_\kappa^{-1}\bydef\sqrt{\frac{2|V''(\xs)|}{\kappa}}$, and $\Ncal(x)\bydef\frac{1}{\sqrt{2\pi}}\int_{-\infty}^x\re^{-y^2}\,\rd y$ is the cumulative normal distribution.

This formula follows from a slightly modified form of Laplace's method for a function with unique interior maxima; cf. Appendix~\ref{ap:varphiC}; see also~\cite{ChalubSouza:JMB2016} where a similar formula, with exponential small errors at the boundaries, appears for the first time. For other cases in two-player games, we ask the reader to wait until the next section.

Regarding the compatibility in this case, notice that the Replicator flow will converge to $0$, if $X_0<\xs$, and to $1$, if $X_0>\xs$. For the Moran process, type \A will be extinct or fixated almost surely if its initial presence is less than or larger than $\xs$, respectively. See also the right panel on Figure~\ref{fig:basic_cases}.

We point out that  the cases studied so far have $\Vgap{V}=+\infty$ and $\Msep{V}=1$. Furthermore, the function $f$ in Def.~\ref{def:compatibility} is such that $f(\xs)=\frac{1}{2}$.

\begin{figure}
\includegraphics[width=0.49\textwidth]{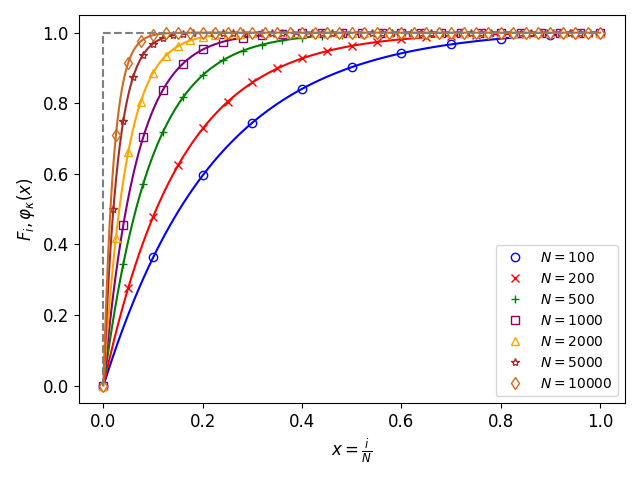}\hfill
\includegraphics[width=0.49\textwidth]{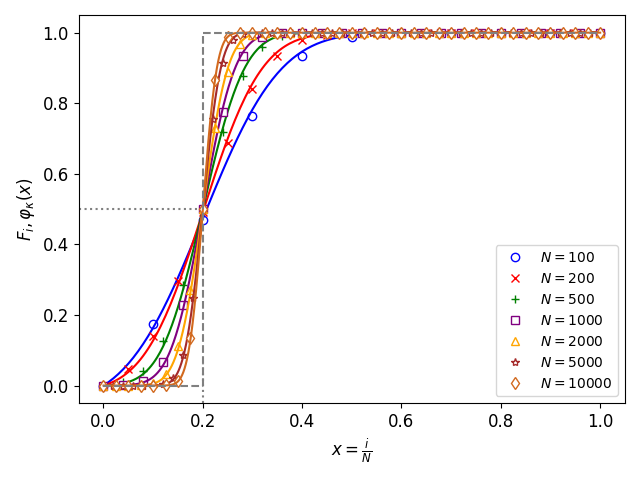}\\
\caption{Fixation probability for the Moran process (markers) and asymptotic approximation of $\varphi_{\kappa_N}$ for one-player game dominated by \A, with $\mathbf{a}^{(1)}=(1.5)$, $\mathbf{b}^{(1)}=(1)$ and $\alpha=1/2$ (left) and for two-player coordination game with $\alpha=1/3$ and payoffs given by $\left(\begin{smallmatrix}5&4\\ 1&5 \end{smallmatrix}\right)$, which corresponds, in the notation of Sec.~\ref{sec:dplayer}, to $\mathbf{a}^{(2)}=(4,5)$, $\mathbf{b}^{(2)}=(5,1)$, i.e.,
$\gamma=-5$ and $\xs=1/5$ (right). Note that at the unstable equilibrium of the replicator equation in the right picture, $\xs=1/5$, the fixation probability is $1/2$ (dotted line), indicating equal probability of moving to the left or to the right from a state initially at $\xs$. The color line indicates $\varphi_{\kappa_N}$ given by Eq.~\eqref{eq:varphiA} (left) and Eq.~\eqref{eq:varphiC} (right), for the same value of $N$ as the markers. For one-player and two-player games, the value of $\kappa_N$ given by Eq.~\eqref{eq:kappaN} simplifies to $\kappa_N=\frac{2|\rho-1|}{N\left\|\log\frac{1+\omega(N)\rho}{1+\omega(N)}\right\|}$ and 
$\kappa_N=\frac{2\max\{|a-c|,|b-d|\}}{N\max\left\{\left|\log\frac{1+\omega(N)a}{1+\omega(N)c}\right|,\left|\log\frac{1+\omega(N)b}{1+\omega(N)d}\right|\right\}}$, respectively. The dashed black line indicates the $t\to\infty$ limit of the Replicator Equation~\eqref{eq:replicator} as a function of the initial condition $x$, and also corresponds to $\lim_{\kappa\to0}\varphi_\kappa$, showing that the Moran process and the replicator equation are compatible in both cases with the function $f$ in Def.~\ref{def:compatibility} such that $f(0)=0$, $f(1)=1$ (both) and $f\left(\frac{1}{5}\right)=\frac{1}{2}$ (right). For clarity, not all values of $F_i$ are plotted.}
 \label{fig:basic_cases}   
\end{figure}

\section{Asymptotic expression for general $d$-player games}
\label{sec:dplayer}

We are ready to prove our main theorem, namely that the asymptotic expression of the fixation probability of the Moran process for a generic fitness potential $V$ is a convex combination of the fixation probability of one-player dominance games and two-player coordination games. The latter will be associated with local maxima of the fitness potential on the boundaries, while the former with interior local maxima of $V$.
\begin{theorem}\label{thm:asymptotic}
    Let $V\in C^3([0,1])$ be according to Def.~\ref{def:Vgeneric}, and let $\mathcal{M}=\{x_0,\dots,x_M\}$ be the set of local maxima of $V$. Let $\kappa=\kappa_N$, given by Eq.~\eqref{eq:kappaN}, and for $x_i\in(0,1)$ define $\sigma_{\kappa,i}^{-1}\bydef\sqrt{2|V''(x_i)|/\kappa}$, $i=0,\dots,M$. Assume further that $\Msep{V}\gg\kappa$ and that if $x_0\not=0$ then $x_0\gg\kappa$ and if $x_M\not=1$ then $1-x_M\gg\kappa$, which is always the case if $\kappa$ is sufficiently small. Then, there are parameters $c_k$, $k=0,\dots,M$ such that 
\begin{displaymath}
  \phi_\kappa(x)=\sum_{i=0}^Mc_i\varphi_{i,\kappa}(x) + \order{\kappa}\ ,\quad 0<\kappa\ll 1\ ,  
\end{displaymath}
where
\begin{align*}
x_0=0&\Rightarrow\varphi_{0,\kappa}(x)=\varphi_\kappa^{(\A)}[V'(0)](x)\ ,\qquad \text{cf. Eq.~\eqref{eq:varphiA}}\ ,\\
x_M=1&\Rightarrow\varphi_{M,\kappa}(x)=\varphi_{\kappa}^{(\B)}[V'(1)](x)\ ,\qquad \text{cf. Eq.~\eqref{eq:varphiB}}\ ,\\
x_i\in(0,1)&\Rightarrow\varphi_{i,\kappa}(x)=\varphi^{(C)}_\kappa[\xs,V''(\xs)](x)\ ,\quad i=0,\dots,M\ ,\qquad \text{cf. Eq.~\eqref{eq:varphiC}}\ .
\end{align*}
Furthermore
\[
c_i\bydef\frac{d_i}{\sum_{i=0}^Md_i}\ ,
\]
where
\begin{align*}
    x_i=0,1&\Rightarrow d_i\bydef\frac{\kappa\exp\left(2V(x_i)/\kappa\right)}{2|V'(x_i)|}\left(1-\exp\left(-2|V'(x_i)|/\kappa\right)\right)%\frac{\re^{2V(1)/\kappa}}{V'(1)}\left(1-\re^{-2V'(1)/\kappa}\right)
    , \\
    x_i\in(0,1)&\Rightarrow    d_i\bydef{\sigma_{\kappa,i}}\sqrt{2\pi}\exp\left(2V(x_i)/\kappa\right)\left(\Ncal(\sigma_{\kappa,i}^{-1}(1-x_i))-\Ncal(-\sigma_{\kappa,i}^{-1}x_i)\right).
\end{align*}    
\end{theorem}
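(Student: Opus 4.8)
The plan is to prove Theorem~\ref{thm:asymptotic} by a direct Laplace-type analysis of the numerator and denominator of $\varphi_\kappa[V](x)$ as defined in Eq.~\eqref{eq:varphikappa}, localizing the integral $\int_0^y\exp(2V(s)/\kappa)\,\rd s$ around each local maximum $x_i\in\mathcal{M}$. Since $V$ is generic (Def.~\ref{def:Vgeneric}), the integrand $\exp(2V(s)/\kappa)$ concentrates, as $\kappa\to0$, on neighborhoods of the local maxima, and the contributions from different maxima are governed by the factor $\exp(2V(x_i)/\kappa)$; the $\Msep{V}\gg\kappa$ hypothesis guarantees these neighborhoods are disjoint at the relevant scale $\sqrt{\kappa}$ (interior) or $\kappa$ (boundary), so there is no interference between bumps. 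First I would write $\int_0^1\exp(2V(s)/\kappa)\,\rd s=\sum_{i=0}^M I_i+(\text{exponentially small})$, where $I_i$ is the integral over a shrinking neighborhood $U_i$ of $x_i$ whose width is, say, $\kappa^{1/2-\delta}$ for an interior maximum and $\kappa^{1-\delta}$ for a boundary maximum, for small $\delta>0$; the complement contributes $\order{\kappa^\infty}$ relative to the dominant term because there $V$ is bounded away from $\max V$ on that piece by a fixed amount times a power of the window size.

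Next I would evaluate each $I_i$ by Taylor expansion of $V$ at $x_i$. For an interior maximum, $V(s)=V(x_i)+\tfrac12 V''(x_i)(s-x_i)^2+\order{(s-x_i)^3}$ with $V''(x_i)<0$, and the Gaussian integral over $U_i$ matches, up to the stated error, the full-line Gaussian; rescaling by $\sigma_{\kappa,i}=\sqrt{\kappa/(2|V''(x_i)|)}$ gives exactly the $d_i$ written for the interior case, namely $\sigma_{\kappa,i}\sqrt{2\pi}\exp(2V(x_i)/\kappa)$ times the normalization $\Ncal(\sigma_{\kappa,i}^{-1}(1-x_i))-\Ncal(-\sigma_{\kappa,i}^{-1}x_i)$ once we recognize that the partial integral $\int_0^y$ truncated near $x_i$ produces precisely the numerator of $\varphi^{(C)}_\kappa[x_i,V''(x_i)]$. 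For a boundary maximum at $x_0=0$ (resp.\ $x_M=1$), $V$ is monotone there with $V'(0)<0$ (resp.\ $V'(1)>0$), so $I_0=\int_0^{U_0}\exp(2V(s)/\kappa)\,\rd s\approx\exp(2V(0)/\kappa)\int_0^\infty\exp(2V'(0)s/\kappa)\,\rd s$-type computation, but more carefully one keeps the finite-length exponential so as to recover $\varphi_\kappa^{(\A)}$ and $\varphi_\kappa^{(\B)}$ exactly; this yields the stated $d_i=\frac{\kappa\exp(2V(x_i)/\kappa)}{2|V'(x_i)|}(1-\exp(-2|V'(x_i)|/\kappa))$. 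Then $\varphi_\kappa[V](x)=\frac{\sum_i(\text{numerator bump }i\text{ evaluated at }x)}{\sum_i d_i}$, and grouping the numerator bump $i$ as $d_i\varphi_{i,\kappa}(x)$ — which is legitimate because each $\varphi_{i,\kappa}$ is by construction the corresponding normalized partial integral — gives $\varphi_\kappa[V](x)=\sum_i c_i\varphi_{i,\kappa}(x)+\order{\kappa}$ with $c_i=d_i/\sum_j d_j$.

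The main obstacle I anticipate is controlling the error terms uniformly in $x$ and establishing that the cross-terms and tail contributions are genuinely $\order{\kappa}$ relative to $\sum_j d_j$ rather than merely $o(1)$. Three points need care: (i) the cubic-and-higher remainder in the Taylor expansion at an interior $x_i$ contributes a relative error of order $\kappa^{1/2}$ unless one symmetrizes, and near a boundary maximum the analogous quadratic remainder is relatively $\order{\kappa}$ — so the claimed $\order{\kappa}$ is believable but requires either a Watson-lemma-style higher-order expansion or a careful bound using that odd moments of the truncated Gaussian vanish to leading order; (ii) when $x$ itself lies in one of the windows $U_i$, the partial integral $\int_0^x$ only captures part of bump $i$, and one must check that the decomposition $\int_0^x\exp(2V/\kappa)=\sum_{j<i}I_j+(\text{partial }I_i)$ still organizes into $\sum_j d_j\varphi_{j,\kappa}(x)$ — this is exactly why $\varphi_{i,\kappa}$ is defined as the full normalized profile rather than a half-line object; (iii) the hypotheses $x_0\gg\kappa$, $1-x_M\gg\kappa$, $\Msep{V}\gg\kappa$ must be invoked precisely to absorb the truncation errors of the Gaussian/exponential tails into the $\order{\kappa}$ bound, since e.g.\ $\Ncal(-\sigma_{\kappa,i}^{-1}x_i)$ is exponentially small exactly when $x_i\gg\sqrt\kappa$. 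I would also invoke the underlying estimate in Theorem~\ref{thm:continuousfixation} (or rather its proof technique) to justify replacing the Moran fixation probability $\phi_\kappa$ by $\varphi_{\kappa_N}[V]$ at the outset, so that the whole argument reduces to the asymptotics of the deterministic ratio of integrals.
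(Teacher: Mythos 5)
Your proposal is correct and follows essentially the same route as the paper: a Laplace-method decomposition of $\int_0^x \re^{2V(s)/\kappa}\,\rd s$ into contributions localized at each local maximum, with boundary maxima yielding the exponential profiles and interior maxima the Gaussian ones, and with $c_i=d_i/\sum_j d_j$ arising as the relative local masses. The paper organizes this slightly differently — it partitions $[0,1]$ into subintervals $[y_i,y_{i+1}]$ each containing one maximum and writes $\varphi_\kappa[V]$ as a sum of clipped normalized local profiles $(\varphi_\kappa[V,y_i,y_{i+1}])^+\land 1$ (its Lemma~\ref{thm:aux:thm2}), which is exactly its device for handling your point (ii) about $x$ lying inside a bump window — but the substance, including the error bookkeeping you flag in points (i) and (iii), matches the paper's Appendices~\ref{ap:varphiC} and~\ref{ap:ThmAsymptotic}.
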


\begin{proof}
    See Appendix~\ref{ap:ThmAsymptotic}.
\end{proof}

With these results at hand, and using $\approx$ to indicate leading order terms, we finish the formulas for two-player games. The case in which \A dominates \B, using two-player game theory parlance, is associated with a monotonically decreasing $V$, which has the only local maximum at $x=0$. Therefore, $\varphi_{\kappa}(x)\approx\varphi_{\kappa}^{(\A)}[V'(0)](x)$, $\kappa=\kappa_N$. If \B dominates \A, remembering that $\mathcal{M}=\{1\}$, we conclude that $\varphi_{\kappa}(x)\approx\varphi_{\kappa}^{(\B)}[V'(1)](x)$. 

Let us study coexistence games in more detail. In this case, the best response for each player is to adopt a strategy that is different from the one adopted by the adversary. The payoff matrix is such that $\left(\begin{smallmatrix}a&b\\ c&d\end{smallmatrix}\right)$, with $a<c$ and $b>d$. The fitness potential is given by Eq.~\eqref{eq:Vtwoplayer}, $\gamma>0$ and $x_*\in(0,1)$. Furthermore, $\mathcal{M}=\{0,1\}$. The $V$-gap is given by $\Vgap{V}=\gamma|1-2\xs|$, and the asymptotic expression of the fixation probability is a convex combination of $\varphi_{\kappa_N}^{(\A)}[V'(0)]$ and $\varphi_{\kappa_N}^{(\B)}[V'(1)]$.

If $\Vgap{V}=0$, which is true if and only if $\xs=1/2$, then $V(0)=V(1)$ and 
\[
\varphi_{\kappa_N}(x)=\frac{1}{2}\left(\varphi_{\kappa_N}^{(\A)}[V'(0)](x)+\varphi_{\kappa_N}^{(\B)}[V'(1)](x)\right)+\order{\kappa_N}\ . 
\]
For $\Vgap{V}>0$, $\varphi_{\kappa_N}$ approaches asymptotically one of the dominance expressions as $\kappa_N\to 0$, namely $\varphi_{\kappa}\approx\varphi_{\kappa}^{(\A)}$ if $\xs>1/2\Leftrightarrow V(1)<V(0)$ and $\varphi_{\kappa}\approx\varphi_{\kappa}^{(\B)}$ if $\xs<1/2\Leftrightarrow V(1)>V(0)$. However, in any case, the replicator equation is such that an interior stable equilibrium exists. The compatibility between large, but finite and infinite population models for coexistence games is limited to the case in which $\xs=1/2$. In all the other cases, non-trivial fixation plateaux (i.e., a set of initial conditions such that the fixation probability is essentially constant and away from 0 and 1) are transient and disappear as $N\to\infty$ or, equivalently, $\kappa_N\to0$. 

We conclude that non-trivial fixation plateaux may exist in the limit $N\to\infty$ only if $\Vgap{V}=0$, otherwise, they are transient, as stated in the next result.
\begin{corollary}\label{cor:asymptotic}
If $\Vgap{V}>0$, then the fixation pattern in the limit $\kappa_N\to0$ is given either by the fixation pattern of a one-player game or by the fixation pattern of a two-player coordination game.    
\end{corollary}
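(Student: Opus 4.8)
The plan is to derive Corollary~\ref{cor:asymptotic} directly from Theorem~\ref{thm:asymptotic} by showing that, when $\Vgap{V}>0$, all but one of the coefficients $c_i$ vanish in the limit $\kappa_N\to0$, so that the convex combination collapses to a single term of one of the two fundamental types analyzed in Section~\ref{sec:twocases}. First I would recall that the $d_i$ in the theorem each carry a factor $\exp(2V(x_i)/\kappa)$, while the remaining prefactors --- $\kappa/(2|V'(x_i)|)$ times $(1-\exp(-2|V'(x_i)|/\kappa))$ on the boundary, and $\sigma_{\kappa,i}\sqrt{2\pi}$ times the Gaussian difference in the interior --- are, for $0<\kappa\ll1$, bounded above and below by powers of $\kappa$ (constant times $\kappa$ on the boundary; constant times $\kappa^{1/2}$ in the interior, since the Gaussian difference tends to $1$ under the hypotheses $x_0\gg\kappa$, $1-x_M\gg\kappa$). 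Hence $d_i = \mathrm{poly}(\kappa)\cdot\exp(2V(x_i)/\kappa)$ with a polynomial factor that is sandwiched between two positive powers of $\kappa$.

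Next I would let $x_{i^\star}\in\mathop{\mathrm{arg\,max}}V$ over $\mathcal{M}$ and, for any other $x_i\in\mathcal{M}$, estimate the ratio
\[
\frac{d_i}{d_{i^\star}} = \frac{\mathrm{poly}_i(\kappa)}{\mathrm{poly}_{i^\star}(\kappa)}\,\exp\!\left(\frac{2\bigl(V(x_i)-V(x_{i^\star})\bigr)}{\kappa}\right)\ .
\]
By definition of the $V$-gap, $V(x_i)-V(x_{i^\star}) \le -\Vgap{V} < 0$ for every $x_i\neq x_{i^\star}$, so the exponential decays like $\exp(-2\Vgap{V}/\kappa)$, which beats the at-most-algebraic blow-up of the polynomial ratio (worst case $\kappa^{1/2}/\kappa = \kappa^{-1/2}$, still $o(\exp(2\Vgap{V}/\kappa))$ as $\kappa\to0$). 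Therefore $d_i/d_{i^\star}\to0$, whence $c_i\to0$ for $i\neq i^\star$ and $c_{i^\star}\to1$. Plugging this into the statement of Theorem~\ref{thm:asymptotic} gives $\phi_{\kappa_N}(x)\to\varphi_{i^\star,\kappa_N}(x)$ up to the $\order{\kappa}$ error, and $\varphi_{i^\star,\kappa}$ is either $\varphi_\kappa^{(\A)}$/$\varphi_\kappa^{(\B)}$ (a one-player dominance pattern, if $x_{i^\star}\in\{0,1\}$) or $\varphi^{(C)}_\kappa$ (a two-player coordination pattern, if $x_{i^\star}\in(0,1)$), which is exactly the claimed dichotomy. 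A small case to dispatch separately: if the arg-max over $\mathcal{M}$ is not unique, then $\Vgap{V}=0$ by its very definition, contradicting the hypothesis, so $i^\star$ is unambiguous.

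The main obstacle, and the only place requiring genuine care, is the bookkeeping of the polynomial-in-$\kappa$ prefactors: one must verify uniformly that each $d_i$ is bounded below by a fixed positive power of $\kappa$ (so no spurious cancellation in the denominator $\sum_i d_i$) and bounded above by another fixed power, so that the ratio $\mathrm{poly}_i/\mathrm{poly}_{i^\star}$ really is dominated by $\exp(-2\Vgap{V}/\kappa)$. For the interior terms this uses the hypotheses $x_0\gg\kappa$ and $1-x_M\gg\kappa$ to keep the Gaussian difference bounded away from $0$ and $1$ is irrelevant here (we only need it $\to1$, or at least $\asymp 1$); for the boundary terms it uses $|V'(0)|,|V'(1)|\neq0$ from genericity, so $1-\exp(-2|V'|/\kappa)\to1$. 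Once this is checked, the conclusion is immediate, and I would also remark that the argument pinpoints \emph{which} pattern survives: the one seated at the global maximum of $V$ among its local maxima, consistent with the explicit two-player discussion preceding the corollary ($\varphi_\kappa^{(\A)}$ when $\xs>1/2$, i.e.\ $V(1)<V(0)$, and $\varphi_\kappa^{(\B)}$ when $\xs<1/2$).
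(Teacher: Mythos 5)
Your proposal is correct and follows essentially the same route the paper takes: the corollary is presented there as an immediate consequence of Theorem~\ref{thm:asymptotic}, with the exponential weights $\exp\left(2V(x_i)/\kappa\right)$ in the $d_i$ concentrating all the mass of the convex combination on the unique global maximizer of $V$ over $\mathcal{M}$ when $\Vgap{V}>0$ (exactly as in the paper's preceding discussion of coexistence games, where $\varphi_\kappa\approx\varphi_\kappa^{(\A)}$ or $\varphi_\kappa^{(\B)}$ according to the sign of $V(0)-V(1)$). Your extra bookkeeping of the algebraic-in-$\kappa$ prefactors, which the paper leaves implicit, is sound and is indeed the only point needing care.
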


\begin{proof}
If $\Vgap{V}>0$, then $V$ has a unique global maximum. In this case, when $\kappa_N\to0$, all but one of the $c_i$ coefficients will vanish in the limit, cf. Thm.~\ref{thm:asymptotic}.
\end{proof}

For any initial condition $x\in[0,1]$ in one-player and $x\in[0,1]\backslash\{\xs\}$ in two-player coordination games, the fixation probability is such $\lim_{\kappa\to0}\varphi_{\kappa}(x)\in\{0,1\}$; in the latter case, we also have that $\lim_{\kappa\to0}\varphi_\kappa(\xs)=\frac{1}{2}$. In these two cases, these are the only additional constraints to the function $f$ in Def.~\ref{def:compatibility}. In general, the choice of the function $f$ is constrained only by its values on the unstable points of the replicator equation and on the stable points that still exist after taking the large population limit. From the assumptions, these points are finite, and therefore, there is a large degree of freedom in defining $f$. On one hand, a value of 0 (1) corresponds to extinction (fixation, respect.) in the Moran process and to attraction by 0 (1, respect.) in the replicator dynamics. On the other hand, a value between 0 and 1 indicates the existence of a region of constant fixation probability, or a \emph{fixation plateau}, in the large population limit of the Moran process, and to at least one attractor of the replicator dynamics in the region of the plateau. According to Thm.~\ref{thm:asymptotic} and Corollary~\ref{cor:asymptotic}, the former is sensitive to small perturbations of $V$, but the latter is not. A fixation plateau that exists even after taking the large population limit in the Moran process represents an interior attractor of the initial dynamics of the Moran process, before absorption, that we term an \emph{essentially metastable state}, or, in short, an EMS. In the sequel, we are able to fully characterize its existence according to the properties of the fitness potential $V$. However, before proceeding and following~\cite{ChalubSouza:JTB2018}, we introduce the $\frac{\kappa}{2}$-exponential average of $V$:
\[
\langle V\rangle_\kappa\bydef\frac{\kappa}{2}\log\int_0^1\re^{2V(s)/\kappa}\rd s\ .
\]

\begin{definition}\label{def:EMS}
Let $\xmin\in(0,1)$ be a local minimum of a generic fitness potential $V$. Define $I_\kappa$, $\kappa>0$, as the connected component of the set $\{x\in[0,1]| V(x)<\langle V\rangle_\kappa\}$ such that $\xmin\in I_\kappa$.
    Then, $\xmin$ is an EMS if there is $\kappa_0>0$ such that for any $0<\kappa<\kappa_0$ $I_\kappa\ne\emptyset$, and $0,1\not\in I_\kappa$.
\end{definition}

Other local minima of $V$ will be referred to as metastable states; they are not essential in the sense that they vanish when $N\to\infty$.

It turns out that we can completely characterize the generic potentials that admit EMSs. In order to do so, we first recall a slightly modified form of a definition made in \cite{ChalubSouza:JMB2016}:
\begin{definition}
    We shall say that a fitness potential $V$ is an \textbf{interior potential} is global maxima is only attained interior. Otherwise it will be termed a \textbf{boundary potential}. We shall also say that $V$ is a \textbf{pure boundary potential}, if its global maxima is only attained at the boundaries.  
\end{definition}

\begin{theorem}\label{thm:plateau}
    Let $V\in C^3([0,1])$ be a generic fitness potential. Then, an EMS exists if and only if $\Vgap{V}=0$ and $V$ is either an interior potential, or a pure boundary potential, or a boundary potential with at least two interior global maxima. Furthermore, if $\xmin$ is an EMS, then the limit $\kappa\to0$ of the fixation probability has a plateau over a set that contains the basin of attraction of $\xmin$ in the replicator dynamics, given by Eq.~\eqref{eq:replicator}. More generally, let $x_-<x_+$ denote two consecutive global maxima. If $\xmin^1,\ldots,\xmin^k \in (x_-,x_+)$, then the fixation probability will have a plateau over the closure of the union of their basin of attraction (the unstable points in between will also be included).
\end{theorem}

\begin{proof}
    If $\Vgap{V}>0$, then $V$ has a unique global maximum on the unit interval. Therefore, for any local minimum, the corresponding $I_\kappa$ will contain at least one of the boundary points, provided we take $\kappa$ sufficiently small. This shows the necessity of the gap condition. For sufficiency, let $\Vgap{V}=0$ and assume for the time being that $V$ is not a pure boundary potential; in this case, note that, since $V$ is generic, there exists at least one and at most finitely many local minima in $(x_-,x_+)$.  Since $\lim_{\kappa\to0}\langle V\rangle_\kappa=\sup_{x\in[0,1]}V(x)$,  there exists a  sufficiently small $\kappa$ and an open interval $I'_\kappa\subset(x_-,x_+)$  such that  for any local minimum  $\xmin\in (x_-,x_+)$, $I'_\kappa = I_\kappa$.  Thus, any $\xmin \in (x_-,x_+)$ is an EMS. The second claim follows from Theorem~\ref{thm:asymptotic} and on verifying that $\varphi_{\kappa_N}(x)-\varphi_{\kappa_N}\left(x_-+\order{\kappa_N^{1/2}}\right)\leq \order{\kappa_N^{1/2}}$, for $x<x_+-\order{\kappa_N^{1/2}}$. Hence the fixation probability has a plateau on $(x_-,x_+)$. If $V$ is a pure boundary potential then $x_-=0$ and $x_+=1$, and the proof is similar except that the error estimates are now of $\order{\kappa_N}$. Finally, if $\Vgap{V}=0$ and $V$ is a boundary potential without two interior global maxima, then term in the convex combination of $\phi_\kappa$ corresponding to the boundary term maximum will be $\order{\kappa_N^{1/2}}$ whereas the one corresponding to interior maximum will be $\order{1}$ and thus the former will vanish in the limit $\kappa_N\to0$.
\end{proof}

An immediate consequence of the previous theorem is that in two-player game theory (quadratic fitness potential $V$ given by Eq.~\eqref{eq:Vtwoplayer}, with $\gamma\ne0$ and arbitrary $\xs$), EMSs exist if and only if $\gamma>0$ and $\xs=1/2$ --- cf. the $\frac{1}{2}$-law in \cite{ChalubSouza:JMB2016}. In particular, the only finite but large population dynamics that are modelled after two-player games that are compatible with the replicator equation are games of dominance, coordination and the special case of coexistence in which the evolutionarily stable strategy, or ESS, is exactly at $x=1/2$ (see, e.g.,~\cite{HofbauerSigmund} for the definition of ESS).

We conclude this section by showing that Def.~\ref{def:EMS} is a natural extension of the classical $\mathrm{ESS}_N$ for $N$ large enough, as introduced in~\cite{Nowak2004}. We say that \B is an $\mathrm{ESS}_N$ if i) selection opposes \A invading \B, i.e., $\Psi^{(\A)}(1)<\Psi^{(\B)}(1)$, and ii) selection opposes \A replacing \B, i.e., $F_1<1/N$. Assume $\xmin=0$ is a local minimum of a generic $V$ such that $1\not\in I_\kappa\ne\emptyset$ for $\kappa$ sufficiently small. Initially, note that $V'(0)>0$ and, therefore, for sufficiently large $N$
\[
\Psi^{(\A)}(1)-\Psi^{(\B)}(1)=-\omega(N)V'(1/N)<0\ .
\]
On the other hand, $V(0)<\langle V\rangle_\kappa$ for $\kappa$ sufficiently small, and on differentiating Eq.~\eqref{eq:varphikappa}, we conclude that for $N$ sufficiently large
\[
\varphi'_{\kappa}(x)=\re^{\frac{2}{\kappa}\left(V(x)-\langle V\rangle_\kappa\right)}<1\ , \quad\forall x\in[0,1/N]\ .
\]
We conclude from Thm.~\ref{thm:asymptotic} that $F_1\approx\varphi_{\kappa}(1/N)<1/N$. A similar argument shows the relationship between dominance by \A and essential metastability of the point $x=1$.

\begin{remark}
In order to include the boundary values $x=0,1$ in Def.~\ref{def:EMS} we need to consider $\xmin\in[0,1]$ and the final condition should be rewritten as $\left\{\left\{0,1\right\}\backslash\{\xmin\}\right\}\cap I_\kappa=\emptyset$. For clarity, we opted to keep Def.~\ref{def:EMS} slightly more restrictive than we could and considered the boundary cases separately. Note that the existence of a plateau including $x=0$ ($x=1$) is associated with the extinction (fixation, respect.) of focal type \A. However, the use of the expression \emph{metastability} in these cases seems inappropriate, as these are stable points of the Replicator Equation~\eqref{eq:replicator} that correspond to absorbing states of the Moran process.
\end{remark}

\section{Examples}
\label{sec:examples}

The study of games with an arbitrary number of players, known as $d$-player games, dates back at least to the seminal work by Nash~\cite{Nash1950-lv}. Recent developments have considered the possibility that the number of players in a given game depends on the available strategies~\cite{HansenChalub:JTB2024}. Nevertheless, evolutionary game theory has mainly considered two-player games, either to highlight pairwise contests, which are important for population structured as graphs~\cite{Lieberman2005,Szabo2007-bx}, or to consider mean-field approximation~\cite{M-S1982, PriceM-S1973}, in which each individual plays against an average player --- cf. also \cite{Tarnita2009-gr}. However, in the study of social evolution the value of $d$ is an important modeling parameter~\cite{GokhaleTraulsen_DGA2014, GokhaleTraulsen:PNAS2010,KurokawaIhara2009,PSSS2009, SPS2009}. See~\cite{MurgelSouza2025} for a discussion on how certain central concepts of evolutionary game theory depend on the value of $d$.

In our setting, we assume that, for a large, but finite, population with a fraction $x\in[0,1]$ of individuals of type \A and $1-x$ of type \B, the average fitness of both types is given asymptotically in $N$ by~\cite{Lessard_2011,GokhaleTraulsen_DGA2014,GokhaleTraulsen:PNAS2010}
\[
\psi^{(\A)}(x)\bydef\sum_{i=0}^{d-1}\binom{d-1}{i}x^i(1-x)^{d-1-i}a_i^{(d)}\ ,\qquad
\psi^{(\B)}(x)\bydef\sum_{i=0}^{d-1}\binom{d-1}{i}x^i(1-x)^{d-1-i}b_i^{(d)}\ ,
\]
where $a^{(d)}_i$ and $b^{(d)}_i$ are the payoffs of type \A and type \B individuals, in a $d$-player game in which there are $i$ opponents of type \A. We also define $\mathbf{a}^{(d)}\bydef(a_i^{(d)})_{i=0,\dots,d-1}$, $\mathbf{b}^{(d)}\bydef(b_i^{(d)})_{i=0,\dots,d-1}$. 

The fitness potential is given as a function of the game payoff by
\[
V(x)=-\frac{1}{d}\sum_{i=0}^{d-1}(a^{(d)}_i-b^{(d)}_i)\mathcal{B}_{(i+1,d-i)}(x)\ ,
\]
where 
\[
\mathcal{B}_{(i,d)}(x)\bydef\frac{\Gamma(i+d)}{\Gamma(i)\Gamma(d)}\int_0^xt^{i-1}(1-t)^{d-1}\rd t
\]
is the \emph{incomplete beta function}~\cite{Gradshteyn}. Fitnesses are not uniquely determined by the potential, but if $V$ is a polynomial of degree $d$, $\mathbf{a}^{(d)}$ and $\mathbf{b}^{(d)}$ can be obtained as the result of an interaction between individuals modeled by $d$-player games. In the sequel, whenever we chose $\mathbf{a}^{(d)}$ and $\mathbf{b}^{(d)}$ from a given potential, we consider $b^{(d)}_i$ to be independent of $i$ and such that $a^{(d)}_i,b^{(d)}_i>0$. Adding a constant to the payoffs does not change the potential, but it does change the finite population fixation probability --- cf. \cite{ChalubSouza:JMB2016}. However, these changes turn out to be negligible in the large population regime, since the effect of adding such a constant vanishes as $N$ grows large and since our results are asymptotic as $N\to\infty$ or, equivalently, as $\kappa_N\to0$~\cite{FaureSchreiber}.

In Fig.~\ref{fig:dplayer}, the $V$-gap is strictly positive, and therefore there is no EMS. In the left column, we show that $\lim_{\kappa\to0}\varphi_\kappa$ is of dominance type, even though the associated replicator equation has two interior stable equilibria. The fact that $\arg\max_{x\in[0,1]}V(x)=\{1\}$ implies the dominance of \B-type individuals, i.e., $\lim_{\kappa\to0}\varphi_{\kappa}(x)=0$ for all $x\ne1$.  In the right column, $\arg\max_{x\in[0,1]}V(x)=\{\xs\}$, with $\xs\in(0,1)$, indicating a coordination type game. In the limit as $\kappa\to0$, there is a discontinuity in the limit of $\varphi_{\kappa}$ at the unstable equilibrium of the replicator equation. These are direct numerical verifications of Corollary~\ref{cor:asymptotic}. The limit value of $\varphi_{\kappa}(x)$ can be concluded from the following argument. Consider the two local minima of $V$, $x=0$, and $x=x_{\min}\in(\xs,1)$. For each one, build the corresponding interval $I^0_{\kappa}$ and $I^1_{\kappa}$, such that, for all values of $\kappa$ it is true that $0\in I^0_{\kappa}$ and $1\in I^1_{\kappa}$. Furthermore, as $x\ne \xs$, for $\kappa$ sufficiently small, $x\in I^0_{\kappa}$ or $x^1_{\kappa}$. In the first case, fixation of type \B, corresponding to $x=0$, is more likely, while fixation of type \A, corresponding to $x=1$, is more likely in the second case. Note that $x\in I^0_{\kappa}$ ($x\in I^1_{\kappa}$), for $\kappa$ sufficiently small, if and only if $x<\xs$ ($x>\xs$, respect.), and they show clearly, that absorption from a generic initial condition $x\in(0,1)\backslash\{\xs\}$ is more likely to occur at 0 or at 1 depending on $x\in (0,\xs)$ or $x\in (\xs,1)$, respectively. Equivalently, this is the boundary that requires minimum energy to be reached, cf.~\cite{ChalubSouza:JTB2018}. We finish this example noting that function $f$ in Def.~\ref{def:compatibility} for the left column is such that $f(0)=0$, $f(1)=1$ and otherwise arbitrary; for the right column, additionally $f(\xs)=\frac{1}{2}$.

\begin{figure}
    \centering
\includegraphics[width=0.48\textwidth]{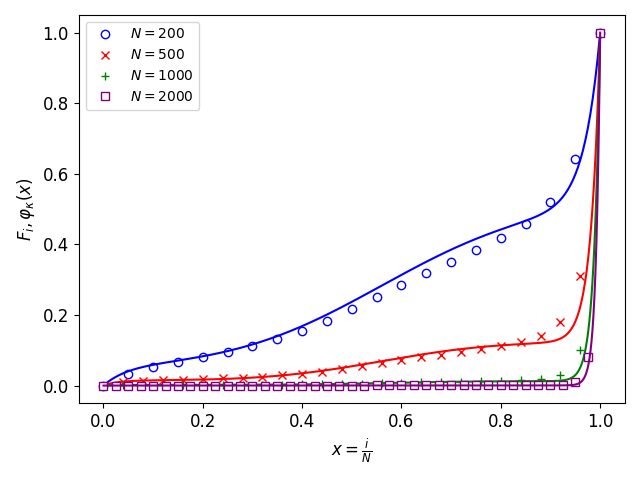}
\includegraphics[width=0.48\textwidth]{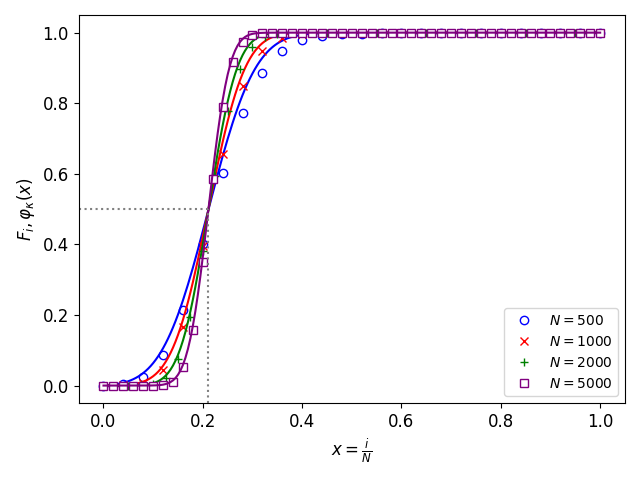}\\
\includegraphics[width=0.48\textwidth]{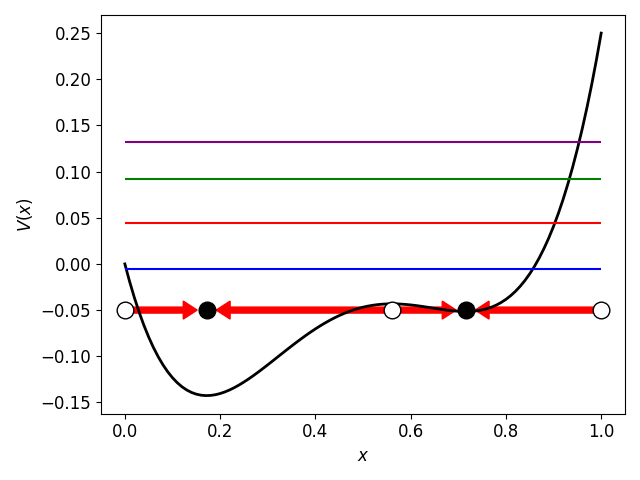}
\includegraphics[width=0.48\textwidth]{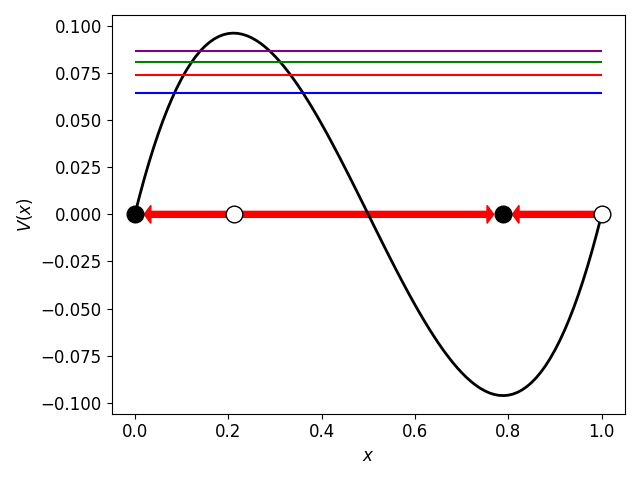}
\caption{Fixation probability for several values of $N$ (above), the associated fitness potential $V$, values of $\langle V\rangle_{\kappa_N}$, and the replicator dynamics flow, and where black and white circles mark stable and unstable points of the replicator dynamics, respectively (below). The use of colors is consistent among upper and lower figures. In the left column, payoffs are given by 
$\ba^{(4)} = (10, 3, 9, 4)$,
$\bb^{(4)} = (8, 7, 5, 7)$, $\alpha=1/2$
while in the right column, there are
$\ba^{(4)} = (1,3,3,1)$, and
$\bb^{(4)} = (2,2,2,2)$, with $\alpha=2/3$.
Fitness potentials $V$ are given by $x(x-1)(7x^2-7x+2)$ and by $x(2x-1)(x-1)$, respectively.
The left case approaches a dominance by \B, while the right case approaches a coordination case in which the discontinuity is associated with the only unstable equilibrium of the replicator equation in $(0,1)$. The fixation probability of the unstable equilibria of the replicator equation is $1/2$, see also Fig.~\ref{fig:basic_cases}. Stable equilibria are associated with trivial plateaux in both cases, indicating extinction and fixation, respectively. Consequently, there is no compatibility between the Moran process and the replicator equation in both cases.
In the left column, $f$ as in Def.~\ref{def:compatibility} is such that $f(0)=0$, $f(1)=1$; in the right column, additionally, $f(\xs)=\frac{1}{2}$.
}
\label{fig:dplayer}
\end{figure}

In Fig.~\ref{fig:plateau}, we show two cases in which the replicator dynamics and the Moran processes are compatible and there is at least one non-trivial fixation plateau. In the left figure, we consider $\mathbf{a}^{(3)}=(1,25,1)$ and $\mathbf{b}^{(3)}=(10,10,10)$, implying $V(x)=9x-24x^2+16x^3$, and therefore $\mathcal{M}=\{x_-,1\}$, with $x_-=\frac{1}{4}\in(0,1)$. The asymptotic behavior of the fixation probability is given $\varphi_{\kappa}(x)\to0$ for $x\in[0,x_-)$, and $\varphi_{\kappa}(x)\to \nu\approx 0.9314\in(0,1)$ for $x\in(x_-,1)$, indicating the existence of a non-trivial plateaux on the basin of attraction (in the Replicator Equation~\eqref{eq:replicator}) of $\xmin=\frac{3}{4}\in\left(x_-,1\right)$. The values of $\langle V\rangle_{\kappa_N}$ are indicated in the lower figure, showing that $\xmin$ is an EMS. Interestingly $\varphi_{\kappa}(x_-)=\nu/2$, indicating that at the unstable equilibrium $x_-$ the system moves with equal probability to the left and becomes extinct or to the right, approaching $\xmin$ and then fixating with probability $\nu$. In this case, $f(x_-)=\nu$ and $f(x_{\min})=1-\nu$, besides its boundary values. We considered $\alpha=4/5$; when $\alpha$ is closer to one, the convergence of $\varphi_\kappa$ to its limit is faster, and in the case in which there are plateaux, this is important as the value of $N$ required to observe the asymptotic behavior of $\varphi_{\kappa_N}$ is higher.

In the right column, we consider the case $V(x)= -2x+10x^2-16x^3+8x^4$, associated to 
$\mathrm{a}^{(4)}=(3, -11/3, 14/3, 1)$, and $\mathrm{b}^ {(4)}=(1,1,1,1)$, with $\alpha=2/3$. There are two stable attractors of the replicator equation at $x_\pm-=\frac{2\pm\sqrt{2}}{4}$ and unstable equilibria at $0$, $1/2$, $1$. There are two plateaux, one at level $\mu\approx 0.03695$ and the other one at $1-\mu$, which can be proved from the symmetry of the problem. In the same sense, $x_+=1-x_-\approx 0.146$. At the interior unstable equilibria, the fixation probability is $1/2$. Finally, $f(x_-)=\mu$, $f(x_+)=1-\mu$.

\begin{figure}
    \includegraphics[width=0.48\textwidth]{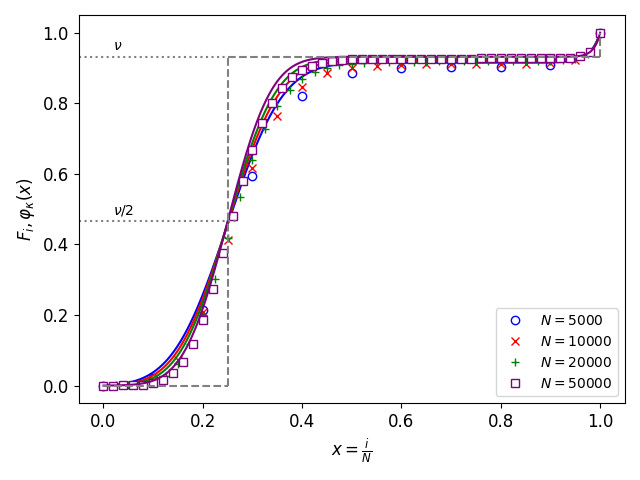}
    \includegraphics[width=0.48\textwidth]{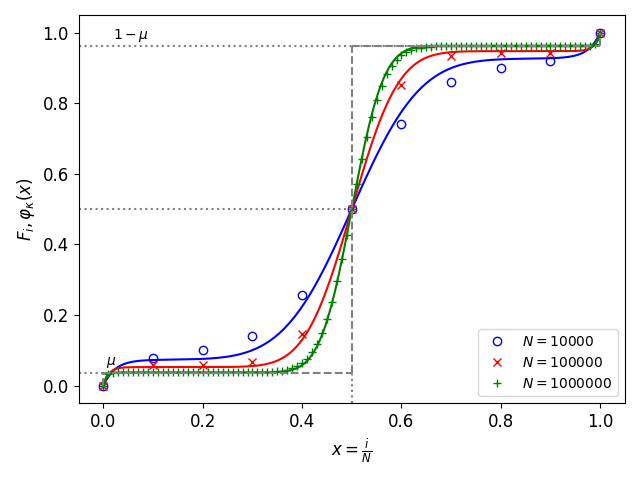}\\
    \includegraphics[width=0.48\textwidth]{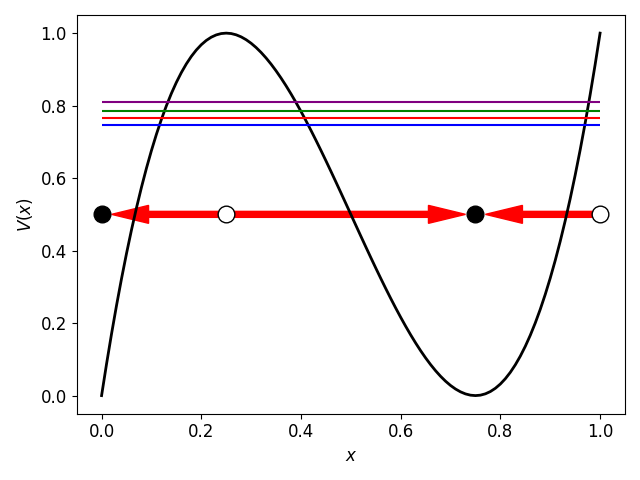}
    \includegraphics[width=0.48\textwidth]{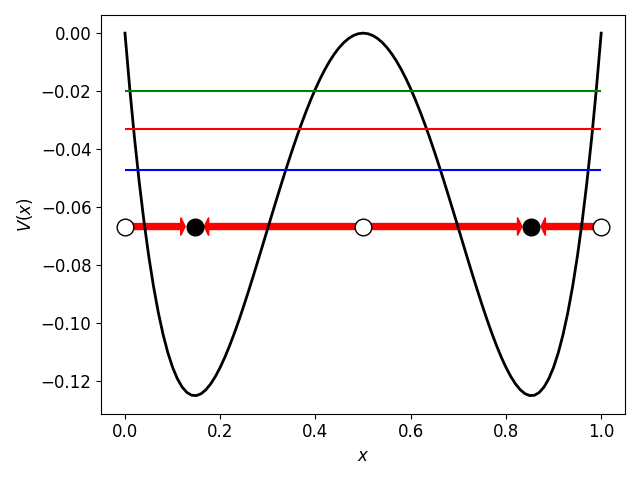}
    \caption{Above: fixation probability for the Moran process (marked) and the asymptotic continuous approximation given by Thm.~\ref{thm:asymptotic}. In both cases, the Moran process and the replicator equation are compatible. Color lines indicate the value of $\langle V\rangle_\kappa$, with consistent colors with the finite population process marked above. Left: $V(x)=9x-24x^2+16x^3$, with two global maxima at $x_-=1/4$ and $x=1$, and two global minima at $x=0$ and $x_{\min}=3/4$. As $V(1/4)=V(1)$, the attractor at $x=3/4$ is an EMS. Furthermore, $f$ as in Def.~\ref{def:compatibility} is such that $f(x_-)=\nu/2$ and $f(x_{\min})=\nu$. Right: $V(x)= -2x+10x^2-16x^3+8x^4$ with two global minima, $x_-$ and $x_+=1-x_-$, and three global maxima, implying the existence of two plateaux. In this case, $f(x_-)=\mu$, $f(x_+)=1-\mu$.}
    \label{fig:plateau}
\end{figure}

\section{The inverse problem}
\label{sec:inverse}

 So far, we have discussed the fixation pattern of the Moran process for given fitness functions and whether it is compatible with the associated replicator equation. In this section, we will discuss the inverse problem, i.e., we will look for potentials $V$ such that the associated Moran process is compatible with a given replicator equation. The formulation of the Replicator Dynamics in terms of the derivative of the fitness potential clearly shows the connection of interior sinks and sources of the replicator flow with the local extrema of the potential. In this vein, let $\mathcal{N}=\{y_0,\dots,y_m\}$ and $\mathcal{M}=\{x_0,\dots,x_M\}$ be the set of distinct stable and unstable equilibria of the replicator, respectively, with $|m-M|\le1$. Since all equilibria are non-degenerated, they are interlaced, i.e., either we have $0=y_0<x_0<y_1<\ldots\le 1$ or $0=x_0<y_0<x_1<\cdots\le 1$. For non-degenerated equilibria, it is always true that we have $\mathcal{N},\mathcal{M}\not=\emptyset$. Indeed, since the boundary points are always equilibria for the replicator, then either $0\in\mathcal{N}$ and $1\in\mathcal{M}$ (or vice-versa) or both belong to a single equilibrium type, $\mathcal{N}$ say, and the interlaced property guarantees that there is an interior equilibrium in $\mathcal{M}$ (or vice-versa).

The inverse problem will be solved by finding $V$ such that its local minima are given by $y_i$ and its local maxima by $x_j$, $i=0,\dots,m$, $j=0,\dots,M$. In addition, one might specify a subset $\widehat{\mathcal{M}}$ of $\mathcal{M}$ specifying the global maxima. No further local minima or maxima should exist.

When $\widehat{\mathcal{M}}=\emptyset$ the inverse problem is easily (and non uniquely) solvable. Let $V'(x)=\pm \prod_{x_i\in\mathcal{N}\cup\mathcal{M}}(x-x_i)$ (the sign is chosen depending on the stability of the first inner equilibrium), then $V(x)=\int_0^x V'(s)\,\rd s$ is a solution. When $\widehat{\mathcal{M}}\not=\emptyset$, then these additional constraints require adding higher-order polynomials to the non-constrained problem, in practice implying increasing the number of players. A general solution does not seem to be available, and we shall investigate a few cases and show that already for some two-player games the solution can be quite complex.

If $\#\mathcal{N}=\#\mathcal{M}=1$, then the replicator dynamics is of dominance type. Its flow is given by \includegraphics[height=0.7\baselineskip]{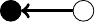} or \includegraphics[height=0.7\baselineskip]{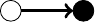}, for dominance by \B or by \A, respectively, where white circles denote unstable equilibria, black circles denote stable equilibria, and the arrows indicates the direction of the flow, and it is not difficult to find a compatible Moran process with a linear potential $V$.

If $\#\mathcal{N}=2$ and $\#\mathcal{M}=1$, then we have a coordination-type game, with the flow of the replicator dynamics given by \includegraphics[height=0.7\baselineskip]{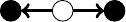}. In this case, a concave quadratic potential with a maximum at the unstable equilibrium, indicating a two-player game, will produce compatible processes.

If $\#\mathcal{N}=1$ and $\#\mathcal{M}=2$, i.e., a coexistence-like game \includegraphics[height=0.7\baselineskip]{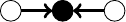}, the situation is more involved, as the constraint $V(0)=V(1)=0$ introduces new restrictions on the choice of the potential. 

To study this class of games, we start with the following lemma:

\begin{lemma}\label{lem:polynomial}
    Let $\mathcal{P}_n$, for $n\in\N$ be the set of strictly positive polynomials in $[0,1]$ with degree at most $n$ and normalized such that $\int_0^1p(x)\rd x=1$. Define 
    \[
    \gamma_n\bydef\inf_{p\in\mathcal{P}_{n}}\int_0^1xp(x)\rd x\ .
    \]
    Then for all $n\in\N\cup\{0\}$, $0<\gamma_{n+1}\le\gamma_n\le \gamma_0=1/2$ and $\lim_n\gamma_n=0$. Furthermore $\sup_{p\in\mathcal{P}_{n}}\int_0^1xp(x)\rd x=1-\gamma_n$.
\end{lemma}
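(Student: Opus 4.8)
The plan is to settle the monotonicity, the endpoint value, and the limit by elementary means, and to concentrate the effort on the strict inequality $\gamma_n>0$. For the monotonicity, note that a polynomial of degree at most $n$ has degree at most $n+1$, so $\mathcal{P}_0\subseteq\mathcal{P}_n\subseteq\mathcal{P}_{n+1}$; taking infima over larger sets can only decrease them, giving $\gamma_{n+1}\le\gamma_n\le\gamma_0$, and $\mathcal{P}_0=\{p\equiv 1\}$ yields $\gamma_0=\int_0^1 x\,\rd x=\tfrac12$. For the limit, I would test against the family $q_n^\delta(x)\bydef(1-\delta)(n+1)(1-x)^n+\delta$, which lies in $\mathcal{P}_n$ for every $\delta\in(0,1)$ (it has degree $n$, is bounded below by $\delta>0$, and integrates to $1$); since $\int_0^1 x(1-x)^n\,\rd x=\tfrac{1}{(n+1)(n+2)}$, one gets $\int_0^1 x\,q_n^\delta(x)\,\rd x=\tfrac{1-\delta}{n+2}+\tfrac{\delta}{2}\to\tfrac{1}{n+2}$ as $\delta\to 0^+$, hence $\gamma_n\le\tfrac1{n+2}$, and combined with $\gamma_n>0$ this forces $\lim_n\gamma_n=0$.

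The crux is $\gamma_n>0$, i.e.\ ruling out that positive, unit-mass polynomials of bounded degree can push all their mass towards $0$. I would argue by contradiction: suppose $p_k\in\mathcal{P}_n$ with $\int_0^1 x p_k(x)\,\rd x\to0$, and set $m_{j,k}\bydef\int_0^1 x^j p_k(x)\,\rd x$. Then $m_{0,k}=1$, while for $1\le j\le n$ the bound $x^j\le x$ on $[0,1]$ together with $p_k\ge0$ gives $0\le m_{j,k}\le m_{1,k}\to0$. Writing $p_k(x)=\sum_{i=0}^n c_{i,k}x^i$, the identities $m_{j,k}=\sum_{i=0}^n c_{i,k}/(i+j+1)$ express the moment vector as the image of the coefficient vector under the matrix $\bigl(1/(i+j+1)\bigr)_{0\le i,j\le n}$, which is the Gram matrix of the linearly independent family $\{1,x,\dots,x^n\}$ in $L^2([0,1])$, hence positive definite and in particular invertible. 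Therefore the coefficients of $p_k$ converge, so $p_k\to p^*$ uniformly on $[0,1]$ for some polynomial $p^*$ of degree $\le n$ with $p^*\ge0$, $\int_0^1 p^*=1$, and $\int_0^1 x p^*(x)\,\rd x=0$; but the last two are incompatible, since $x p^*(x)\ge0$ is continuous on $[0,1]$ with vanishing integral, forcing $p^*\equiv0$. This contradiction gives $\gamma_n>0$.

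Finally, for the supremum I would use the reflection $R\colon p(x)\mapsto p(1-x)$: it is a linear bijection of $\mathcal{P}_n$ onto itself (it preserves degree, strict positivity on $[0,1]$, and, by the substitution $u=1-x$, the normalization), and $\int_0^1 x\,(Rp)(x)\,\rd x=\int_0^1(1-u)p(u)\,\rd u=1-\int_0^1 u p(u)\,\rd u$. Hence the set of values $\{\int_0^1 x p(x)\,\rd x:p\in\mathcal{P}_n\}$ is symmetric about $\tfrac12$, so its supremum is $1-\gamma_n$. I expect the moment argument for $\gamma_n>0$ to be the only genuinely delicate step: it is exactly where the rigidity imposed by the fixed degree (through the invertibility of the Hilbert-type matrix) enters, and without it the infimum would indeed be $0$, as the degree-free analogue over arbitrary probability densities on $[0,1]$ shows.
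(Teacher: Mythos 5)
Your proof is correct, and its skeleton (monotonicity from the inclusion $\mathcal{P}_n\subset\mathcal{P}_{n+1}$, the value $\gamma_0=1/2$, delta-like sequences built from $(n+1)(1-x)^n$ for the limit, and the reflection $x\mapsto 1-x$ for the supremum) coincides with the paper's. Where you genuinely diverge is at the crux, $\gamma_n>0$: the paper disposes of this in one line by asserting that a minimizing sequence would have to converge to a Dirac delta at $x=0$, ``which is absurd,'' without saying why bounded-degree polynomials cannot do this. Your moment argument supplies exactly the missing rigidity: the map from coefficients to the moments $\bigl(\int_0^1 x^j p\bigr)_{j=0}^n$ is the Hilbert-type Gram matrix $\bigl(1/(i+j+1)\bigr)$ of $\{1,x,\dots,x^n\}$ in $L^2([0,1])$, hence invertible, so forcing $m_{0,k}=1$ while $m_{j,k}\to0$ for $j\ge1$ pins down a uniform limit $p^*$ with $\int_0^1 p^*=1$ and $\int_0^1 xp^*=0$, which is impossible for a nonnegative polynomial. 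This is a complete proof of the step the paper only gestures at, and it also makes explicit why the degree bound is essential. A second small improvement on your side: you regularize to $q_n^\delta=(1-\delta)(n+1)(1-x)^n+\delta$ so that your test functions actually lie in the strictly positive class $\mathcal{P}_n$, whereas the paper uses $(n+1)(1-x)^n$ directly even though it vanishes at $x=1$ and only belongs to $\overline{\mathcal{P}_n}$ (as the authors themselves concede later in Section 6). The only cosmetic remark is that the limit $\gamma_n\to0$ needs only $\gamma_n\ge0$ together with $\gamma_n\le 1/(n+2)$, not the strict positivity you invoke there.
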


\begin{proof}
The value of $\gamma_0$ is immediate. The fact that the sequence is non-increasing follows from the inclusion $\mathcal{P}_{n}\subset\mathcal{P}_{n+1}$. It is clear that $\gamma_n\ge 0$. We claim that $\gamma_n>0$, otherwise, there is a sequence of polynomials with degree at most $n$ converging to a Dirac delta supported at $x=0$, which is absurd. Finally, since there are polynomial delta-sequences in $[0,1]$ (e.g., $(n+1)(1-x)^n$), we conclude that $\gamma_n\to 0$. The last claim is immediate.
\end{proof}

We use this result to show that given coordinate-like replicator dynamics, with one stable interior point at $\xs\in(0,1)$, there is a polynomial potential, with a sufficiently large degree, such that the associated Moran process has one and only one interior EMS at $\xs$, and no further metastable points.

More precisely, we consider a given replicator equation with two unstable equilibria on the boundaries and one unique stable equilibrium at $x=\xs\in(0,1)$. As we have already seen (Theorem~\ref{thm:asymptotic}), the Moran dynamics associated to a second-order polynomial $V=\gamma(x-\xs)^2-\gamma(\xs)^2$, $\gamma>0$, $\xs\in(0,1)$, is compatible with the replicator if, and only if, $\xs=\frac{1}{2}$. The next result shows $\xs$ becomes closer to either of the boundaries, then the complexity of the interaction (degree of $V$) can increase without bounds. 
\begin{theorem}\label{thm:polynomial}
Let $V$ be a polynomial of degree $d$, with $d\ge 2$, such that the associated Moran process has two unstable equilibria on the boundaries and one unique EMS at $x=\xs$, and no further metastable equilibria. If $\xs\not\in\left[\gamma_n,1-\gamma_n\right]$ then $d> n+2$. 
\end{theorem}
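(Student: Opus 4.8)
The plan is to translate the stated dynamical hypotheses into structural constraints on the polynomial $V$, to read off from them an integral representation of $\xs$ in terms of a strictly positive polynomial of degree $d-2$, and then to invoke Lemma~\ref{lem:polynomial} essentially verbatim. First I would unpack the hypotheses. Since the boundary equilibria $0,1$ of the replicator equation $\dot X=-X(1-X)V'(X)$ are unstable, linearization forces $V'(0)<0$ and $V'(1)>0$, i.e.\ $0,1$ are local maxima of $V$ on $[0,1]$. By genericity (Def.~\ref{def:Vgeneric}) every zero of $V'$ in $(0,1)$ is simple, so the interior critical points alternate minimum, maximum, $\dots$; counting sign changes of $V'$ between $x=0$ (where $V'<0$) and $x=1$ (where $V'>0$) shows the number of interior local minima exceeds the number of interior local maxima by exactly one. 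The hypothesis ``one unique EMS at $\xs$ and no further metastable equilibria'' means $V$ has a single interior local minimum, namely $\xs$; hence there are no interior local maxima, so $\mathcal{M}=\{0,1\}$ and $\xs$ is the unique (simple) zero of $V'$ in $(0,1)$, with $V''(\xs)>0$. Since an EMS exists, Theorem~\ref{thm:plateau} gives $\Vgap{V}=0$, which together with $\mathcal{M}=\{0,1\}$ forces $V(0)=V(1)$.

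Next I would factor $V'(x)=(x-\xs)h(x)$, where $h$ is a polynomial of degree $d-2$; because $\xs$ is the only zero of $V'$ in $[0,1]$ and $h(\xs)=V''(\xs)>0$, continuity yields $h>0$ on all of $[0,1]$. The condition $V(1)-V(0)=\int_0^1 V'(x)\,\rd x=0$ then reads $\int_0^1 (x-\xs)h(x)\,\rd x=0$, so that
\[
\xs=\frac{\int_0^1 x\,h(x)\,\rd x}{\int_0^1 h(x)\,\rd x}=\int_0^1 x\,p(x)\,\rd x,\qquad p:=\frac{h}{\int_0^1 h(s)\,\rd s}\in\mathcal{P}_{d-2}.
\]
Arguing by contraposition, suppose $d\le n+2$, i.e.\ $d-2\le n$. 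Then $\mathcal{P}_{d-2}\subseteq\mathcal{P}_n$, so $p\in\mathcal{P}_n$, and Lemma~\ref{lem:polynomial} gives $\gamma_n\le\int_0^1 x\,p(x)\,\rd x\le 1-\gamma_n$, that is $\xs\in[\gamma_n,1-\gamma_n]$. Hence $\xs\notin[\gamma_n,1-\gamma_n]$ implies $d>n+2$, as claimed.

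The main obstacle is the first step: extracting, from the purely dynamical hypotheses about the Moran process, the precise algebraic facts that $\mathcal{M}=\{0,1\}$, that $\xs$ is a simple zero of $V'$ (so $V'=(x-\xs)h$ with $h$ strictly positive of degree $d-2$), and that $V(0)=V(1)$ — this is where Theorems~\ref{thm:asymptotic} and \ref{thm:plateau} and the genericity assumption all get used. Once that is in place, the remainder is an elementary first-moment identity plus a direct appeal to Lemma~\ref{lem:polynomial}. I would also remark that the low-degree cases are consistent with the statement: for $d=2$ one has $h$ a positive constant, $p\equiv 1$, hence $\xs=1/2\in[\gamma_n,1-\gamma_n]$ for every admissible $n$, so the hypothesis $\xs\notin[\gamma_n,1-\gamma_n]$ is never triggered; similarly $d=3$ forces $\xs=\int_0^1 xp(x)\,\rd x$ with $p$ linear and positive, again landing inside $[\gamma_n,1-\gamma_n]$ for all $n\ge 2$.
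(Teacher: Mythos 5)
Your proof is correct and follows essentially the same route as the paper's: factor $V'(x)=(x-\xs)p(x)$, use $V(0)=V(1)$ (forced by the existence of an EMS via Theorem~\ref{thm:plateau}) to write $\xs$ as the normalized first moment of the strictly positive degree-$(d-2)$ polynomial $p$, and apply Lemma~\ref{lem:polynomial}. Your version is in fact somewhat more careful than the paper's, since you explicitly justify why $\mathcal{M}=\{0,1\}$, why $p>0$ on $[0,1]$, and why the endpoint/strictness issues in the interval $[\gamma_n,1-\gamma_n]$ cause no trouble under contraposition.
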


\begin{proof}
    The associated replicator equation is given by $X'=-\lambda X(1-X)(X-X^*)p(X)$, where $\lambda>0$ and $p$ is a polynomial of degree $d-2$, namely $(x-\xs)p(x)=V'(x)$.
    From the assumptions, $V(1)=V(0)=0$, and, therefore
    \[
    \xs=\frac{\int_0^1xp(x)\rd x}{\int_0^1p(x)\rd x}\ .
    \]
    Changing the constant $\lambda$ if necessary, we may assume that $p\in\mathcal{P}_{d-2}$. From Lemma~\ref{lem:polynomial}, $\xs\in(\gamma_{d-2},1-\gamma_{d-2})$, and therefore $\gamma_{d-2}\le\gamma_n$, which concludes the proof.
\end{proof}

As a direct corollary of Lemma~\ref{lem:polynomial}, when the interior equilibrium $\xs$ approaches the boundaries, the minimal degree $d$ of a polynomial that reproduces, for finite populations, the replicator dynamics diverges to infinity. 

It is not difficult to explicitly calculate $\gamma_0=1/2$ and $\gamma_1=1/3$, but to the best of our knowledge, the general problem has not been treated before. It is not difficult to show, that $\gamma_n\le 1/(n+2)$, using, e.g., $p(x)=(n+1)(1-x)^n\in\overline{\mathcal{P}_n}$. However, the inequality is strict at $n=2$, as it can be readily checked by $p(x)=\frac{1200}{331}\left(x^2-\frac{15}{8}x+\frac{88}{100}\right)$, in which case $\xs=\frac{78}{331}<\frac{1}{4}$. In fact, any polynomial of the form $p(x)=\lambda(1-bx+cx^2)$, with $b\in(0,2)$, $\frac{b^2}{4}<c<\frac{5b-4}{6}$, and $\lambda$ a normalization constant, is such that $\int_0^1xp(x)\rd x<\frac{1}{4}$.

When $\xs\in\left[\frac{1}{n},\frac{n-1}{n}\right]$, it is possible to guarantee an upper bound on the minimum degree of the polynomial $V$, as shown in the next result.

\begin{corollary}\label{cor:polynomial}
    Assume $\xs\in\left[\frac{1}{n},1-\frac{1}{n}\right]$. Then, there is a polynomial of degree at most $n$ such that the Moran process has an EMS at $\xs$.
\end{corollary}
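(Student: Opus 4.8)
The plan is to build the required potential explicitly from a positive polynomial with prescribed centroid and then quote Theorem~\ref{thm:plateau}. Given $\xs\in\left[\frac{1}{n},1-\frac{1}{n}\right]$, the first step is to produce a polynomial $p$ of degree at most $n-2$, strictly positive on $[0,1]$, normalized by $\int_0^1 p=1$, and with $\int_0^1 x\,p(x)\,\rd x=\xs$. This is exactly where Lemma~\ref{lem:polynomial} enters: the map $p\mapsto\int_0^1 xp$ is linear and continuous on the convex set $\mathcal{P}_{n-2}$, so its image is an interval whose infimum and supremum are $\gamma_{n-2}$ and $1-\gamma_{n-2}$; since $\gamma_{n-2}\le\frac{1}{n}$ (witnessed by the test polynomial $(n-1)(1-x)^{n-2}$, as in the discussion following Lemma~\ref{lem:polynomial}), this interval contains every interior point of $\left[\frac{1}{n},1-\frac{1}{n}\right]$, and in particular $\xs$. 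With such a $p$ fixed, set
\[
V(x)\bydef\lambda\int_0^x (t-\xs)\,p(t)\,\rd t,\qquad \lambda>0;
\]
since $(x-\xs)p(x)$ has degree at most $n-1$, $\deg V\le n$.

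Next I would verify that this $V$ does the job. By construction $V(0)=0$, and the centroid condition gives $V(1)=\lambda\bigl(\int_0^1 tp-\xs\int_0^1 p\bigr)=0$. Moreover $V'(x)=\lambda(x-\xs)p(x)$ is strictly negative on $(0,\xs)$ and strictly positive on $(\xs,1)$ because $p>0$, so $\xs$ is the unique interior critical point of $V$, non-degenerate since $V''(\xs)=\lambda p(\xs)>0$, while $V'(0),V'(1)\ne0$ because $p(0),p(1)>0$ and $\xs\in(0,1)$; hence $V$ is generic. Consequently the set of strict local maxima is $\mathcal{M}=\{0,1\}$ with $V(0)=V(1)$, so $\Msep{V}=1$ and $\Vgap{V}=0$, and the unique interior local minimum $\xmin=\xs$ sits between the two consecutive (indeed, only) global maxima $0$ and $1$. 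Theorem~\ref{thm:plateau} then yields that $\xs$ is an EMS of the Moran process attached to $V$, which is the assertion; the replicator equation $\dot X=-X(1-X)V'(X)$ is correspondingly of coexistence type, with $\xs$ its unique interior (stable) equilibrium and $0,1$ unstable.

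The step I expect to be the only genuine difficulty is the behaviour at the two endpoints $\xs=\frac{1}{n}$ and $\xs=1-\frac{1}{n}$, since the value $\gamma_{n-2}$ is the \emph{infimum}, realized in the limit by $(n-1)(1-x)^{n-2}$, which vanishes at a boundary and therefore produces a non-generic $V$. This is harmless whenever $\gamma_{n-2}<\frac{1}{n}$ strictly --- immediate for $n=2$, where $\mathcal{P}_0=\{1\}$ already gives $\xs=\frac12$ via $V=\gamma(x-\frac12)^2-\frac{\gamma}{4}$, and for larger $n$ by the same strict-positivity refinement used after Lemma~\ref{lem:polynomial} to obtain $\gamma_2<\frac14$. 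In the (at worst exceptional) case $\gamma_{n-2}=\frac1n$, one must instead replace $V'(x)=\lambda(x-\xs)p(x)$ by a factorization of the same total degree, say $V'(x)=\lambda(x-y)(x-\xs)q(x)$ with $q>0$ on $[0,1]$ of degree at most $n-3$ and $y\in(0,1)\setminus\{\xs\}$, tuned so that $V$ takes equal values at its two local maxima (one interior, at $y$, one at a boundary); then $\xs$ is still the local minimum trapped between the two outermost global maxima of $V$, $V$ is still generic, $\deg V\le n$, and Theorem~\ref{thm:plateau} applies --- the only cost being that the associated replicator flow is no longer of pure coexistence type. Checking that such $y,q$ can always be chosen is the one place where I would proceed carefully; the remaining bookkeeping (degree count, genericity, and identifying $\xmin$ within Theorem~\ref{thm:plateau}) is routine.
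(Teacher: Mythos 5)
Your construction is genuinely different from the paper's. The paper builds $V$ as a nonnegative combination of at most two adjacent Bernstein polynomials, $V=-\sum_i c_i b_{i,n}$, which automatically gives $V\le 0$ with $V(0)=V(1)=0$ and places the interior minimum at $\xs$ by a one-parameter matching of $V'(\xs)=0$; your route instead prescribes the \emph{centroid} of a strictly positive $p\in\mathcal{P}_{n-2}$ via Lemma~\ref{lem:polynomial} and integrates $\lambda(t-\xs)p(t)$. For $\xs$ in the \emph{open} interval $\left(\tfrac1n,1-\tfrac1n\right)$ your argument is correct and arguably cleaner: it manifestly produces a generic coexistence-type potential with $\mathcal{M}=\{0,1\}$, $\Vgap{V}=0$, and a single interior minimum, so Theorem~\ref{thm:plateau} applies directly, and it makes the link to Theorem~\ref{thm:polynomial} (same functional, same $\gamma_n$) transparent, which the paper's Bernstein argument does not.

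The genuine gap is at the endpoints $\xs=\tfrac1n$ and $\xs=1-\tfrac1n$. Your claim that $\gamma_{n-2}<\tfrac1n$ strictly ``for larger $n$'' is false at $n=3$: $\gamma_1=\tfrac13$ exactly, and the infimum is attained only in the closure by $2(1-x)$, which vanishes at $x=1$; so for $n=3$, $\xs=\tfrac13$ no strictly positive $p\in\mathcal{P}_1$ has centroid $\tfrac13$ and your primary construction is unavailable. Worse, your fallback $V'=\lambda(x-y)(x-\xs)q$ with $\deg q\le n-3=0$ provably fails there: the equal-maxima condition $\int_y^1(x-y)(x-\xs)\,\rd x=0$ forces $y=3\xs-2=-1\notin(0,1)$ (and the mirror configuration forces $y=3\xs=1$), and an exhaustive check of cubic $V$ shows no \emph{generic} degree-$3$ potential has an EMS at $\tfrac13$. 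The paper sidesteps this by taking $V=-b_{i_*,n}$ when $\xs=i_*/n$, at the price of $V'$ vanishing at a boundary (a non-generic potential, a point the paper itself glosses over); if you want to keep genericity, the endpoint cases of the corollary cannot be rescued by tuning $y$ and $q$ within degree $n$, so this step of your plan would fail rather than merely require care.
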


\begin{proof}
Let $b_{i,n}(x)\bydef\binom{n}{i}x^i(1-x)^{n-i}\ge 0$ for $x\in[0,1]$, with equality if and only if $x\in\{0,1\}$, be the $i$-th Bernstein polynomials of degree $n\in\mathbb{N}$. The unique local maximum of $b_{i,n}$ is at $x=i/n$. Consider $V(x)=-\sum_{i=1}^{n-1}c_ib_{i,n}(x)$, $c_i\ge 0$ for $i=1,\dots,n-1$, normalized such that $\sum_{i=1}^{n-1}c_i=1$. Then $V(x)\le 0$, $V(0)=V(1)=0$. If $\xs=\frac{i_*}{n}$ for $i_*\in\{1,\dots,n-1\}$, then $c_{i_*}=1$ and $c_i=0$ for $i\ne i_*$. If $\xs\in\left(\frac{i_*}{n},\frac{i_*+1}{n}\right)$, we solve $V'(\xs)=0$, noting that $b'_{i_*,n}(\xs)>0>b'_{i_*+1,n}(\xs)$ to conclude that 
$c_{i_*+1}=-c_{i_*}b'_{i_*,n}/b'_{i_*+1,n}$, with $c_i=0$ otherwise.
\end{proof}

The difference between the simple bounds in Corollary~\ref{cor:polynomial} and the more complicated ones in Theorem~\ref{thm:polynomial} is that any nonnegative combination of Bernstein polynomials is non-negative in the interval $[0,1]$, but there are nonnegative polynomials that, when written in the Bernstein base, have negative coefficients. The simplest example is of second order (i.e., with impact in the calculation of $\gamma_2$), cf.~\cite{Powers_2000}.

If $\xs=1/2$, the second-order polynomial $V(x)=x(x-1)$ will generate a Moran process compatible with $X'=-2X(1-X)(1/2-X)$, i.e., $p(x)=1$. If $\xs\in\left(1/3,2/3\right)$, an explicit calculation shows that $V(x)=x(1-x)(\xs(3\xs-2)+x(1-2\xs))$ and $p(x)=12x\xs-6x-6\xs+4=12(x-1/2)(\xs-1/2)+1\ge 0$. For $x\in[0,1]$ and $\xs\in(1/3,2/3)$ it is possible to prove that $p(x)>0$, as $\left|x-\frac{1}{2}\right|\,\left|\xs-\frac{1}{2}\right|\le\frac{1}{2}\cdot\frac{1}{6}=\frac{1}{12}$, and $\int_0^1xp(x)\rd x=\xs$. In this case $V$ is non-degenerate on the boundaries and therefore, generic.

We conclude that to reproduce any 2-player replicator dynamics in the large, but finite population Moran process, a 2-player game may be enough in the dominance and coordination cases, but for the coexistence case it may be necessary to increase the complexity of underlying dynamics, resourcing possibly to games with an arbitrarily large number of players, depending on the precise position of the interior equilibrium.

\section{Conclusions}
\label{sec:conclusions}
The Moran process and the replicator equation are mathematical models often used in population genetics. The first step in using either model is to describe the reproductive viability, also known as fitness, of all types in every possible population configuration or state. This is often done through game theory, allowing a precise, quantitative description of interactions among individuals. The link between both models assumes non-structured populations, using what is known in physics as a mean-field approximation, see~\cite{Benaim_2003,Sridar_2023,vanKampen} and references therein.

Mean-field approximations significantly simplify the mathematical description of the evolutionary process within a population, ultimately leading to a differential equation, which may be stochastic, partial, or ordinary~\cite{ChalubSouza:TPB2009,Champagnat2006,CzupponTraulsen,TraulsenClaussenHauert_2005}. Other approaches are possible, as, for example, the Moran process can be used in populations structured by contact networks, represented by graphs~\cite{BroomRichtar_2008,Chalub_JDG2016,Lieberman2005,PattniBroom_2015,Shakarian2012}, hierarchically organized in metapopulations~\cite{Constable_2014,ConstableMcKane_2015}, or used for the description of invasion process by rare mutants, the so-called adaptive dynamics~\cite{AllenNowakDieckmann_2013,BragaPereiraWardil_2024,Page2002}.

In all mathematical models of the evolutionary process, the use of game theory is convenient but not necessary. Any fitness function defined on the set of population states is mathematically sufficient for analysis, with the use of potential fitness $V$ providing a useful generalization. Fitness potentials are equivalent to games if and only if $V$ is a polynomial. Furthermore, the conjugate used of fitness potentials and arguments similar to the ones in classical mechanics can further simplify the quantitative analysis of the dynamics of both the Moran process and replicator dynamics, cf.~\cite{ChalubSouza:JTB2018,HansenChalub:JTB2024,PiresNeves_JMB24}. 

Using the framework of game theory whenever convenient, the present work primarily investigates the relationship between the finite population Moran process and the replicator equation. Both models address the evolutionary dynamics of gene frequencies while accounting for differences in reproductive rates. The Moran process incorporates genetic drift due to finite population size, with stochastic fluctuations decreasing when the population increases, whereas the replicator equation describes deterministic dynamics, a framework that implicitly assumes a sufficiently large population~\cite{ChalubCorral_SAM23,ChalubSouza:MCM2008,ChalubSouza:TPB2009} and is expected to model the most probable state of the Markov chain~\cite{Chalubsouza:JMB2014}.

Asymptotic expressions for fixation probability in large populations were constructed in order to relate large, but finite, population Moran processes and the replicator equation. These expressions are directly computed from the fitness potential $V$ and are formulated as convex combinations of two functions related to fixation probabilities in constant fitness and two-player coordination scenarios, cf. Thm~\ref{thm:asymptotic}. Additionally, an explicit and computable expression for the effective population size was derived at Eq.~\eqref{eq:kappaN}, see also Thm.~\ref{thm:continuousfixation}.

In Cor.~\ref{cor:asymptotic}, and, particularly in Thm.~\ref{thm:plateau}, we provided a full characterization of the compatibility between the Moran process and the replicator equation, with the same potential $V$, minus some degenerate cases, in terms only of $V$. We say that these models are compatible when there is a one-to-one correspondence between attractors and repellors. Namely, to what corresponds to an evolutionarily stable state in infinite population dynamics, we introduced the concept of an essentially metastable state, or EMS, in finite population dynamics, generalizing in Def.~\ref{def:compatibility} and Def.~\ref{def:EMS} simultaneously a previous work of the authors where similar concepts were introduced and the well-known $\mathrm{ESS}_N$.

In the following, we studied examples with both zero and positive $V$-gaps, and examined the inverse problem: how to obtain a potential $V$ that corresponds to a given replicator dynamics. While the latter is mathematically straightforward, some unexpected results emerge from the two-player replicator dynamics, indicating that a compatible finite population dynamics may require more complex interactions among individuals than what was assumed in the replicator dynamics. These interactions are represented by payoffs --- and fitnesses --- from $d$-player game theory, with possibly large $d$, even if the replicator equation assumed only 2-player games.

The results presented here are consistent with the so-called Lewontin paradox, the well-known fact that the genetic diversity in a large population is much smaller than would be expected by the neutral theory~\cite{Charlesworth2022-kt,Leffler2012-dt,lewontin1974genetic}.
However, there is no universally accepted explanation for the paradox, with several factors possibly playing a role, e.g., frequent evolutionary bottlenecks~\cite{Charlesworth2009-xx}, genetic hitchhiking~\cite{Smith1974-mw}, background selection~\cite{Charlesworth1995-hm}, among others. See~\cite{Coop2016-uq} for a comprehensive review. Adding to this roll of factors, our results indicate that natural selection with differential reproduction (natural selection) may decisively contribute to the explanation of the paradox --- see~\cite{Buffalo2021-ry} for an alternative point of view.

Furthermore, it is well-known that fixation times are faster in small populations~\cite{Hartle_Clark}. Therefore, we conclude that polymorphisms, here understood as mixed metastable states of the evolutionary dynamics, are rare except, possibly, for populations of intermediate size. Note that there is a clear link between these ideas and the fixation probability of potentials with a positive $V$-gap, in which, for intermediate population size $N$, a fixation plateau appears before being dissolved for large values of $N$.

It should also be noted that it is not difficult to find diversity in natural populations. Indeed, this is a widely studied topic in ecology~\cite{Clark2026-jz,Nolting2016-zv}, an area in which models are typically non-linear and, therefore, other mathematical effects, not considered here, play a relevant role. In the particular case of evolutionary genetics, an interesting example in which polymorphism is maintained is the case of human blood groups, the most well-known being the ABO group~\cite{Logdberg2011-qu}. Although there is no universally accepted explanation for its existence, one possibility is that they have different susceptibility to transmissible disease~\cite{Cserti2007-nq}, i.e., the maintenance of the polymorphism requires an eco-evolutionary interaction.

Our work suggests the existence of an intimate relation between what we term EMS and the classical metastability concept in stochastic processes, cf. \cite{bovier2009metastability} and \cite{miclo2022metastability}, which should be further investigated in forthcoming work. As a simple but illustrative example, notice that for any two-player coexistence game, there is a non-trivial (i.e., supported in the interior of the interval [0,1]) quasistationary distribution that remains so in the deterministic limit $\kappa_N\to0$, cf. \cite[Thm 5.6]{FaureSchreiber} and \cite{meleard2012quasi}. However, the existence of a metastable set $U$ for a given dynamics takes into account the balance between the expected exiting times from $U$ for $x\in U$ and the expected entry time into $U$ for $x\not\in U$ --- see \cite[Def. 4.1 and 4.2]{bovier2009metastability}. Therefore, the existence of quasistationary measures in any Markov chain does not automatically imply that it is dynamically relevant, as it is possible that it became absorbed much faster than the typical hitting time of the quasistationary measure; see also~\cite{AFST_2015_6_24_4_973_0} for the convergence of the distribution conditioned on the non-absorbing set to this quasistationary distribution. In particular,~\cite{AFST_2015_6_24_4_973_0} also shows that for reversible and irreducible processes outside of the absorbing set this speed is controlled by the second spectral gap. Most of the estimates available that imply exponentially long persistence times need a combination of an exponentially small dissipation rate of the quasistationary distribution and a sufficiently fast mixing \cite{AFST_2015_6_24_4_973_0}. This seems not to occur outside EMS points, as exemplified in Fig.~\ref{fig:spectralgap}, where we show the dependence of the spectral gap on the potential $V$ for a general two-player coexistence game.

\begin{figure}
\begin{center}
\includegraphics[width=0.9\textwidth]{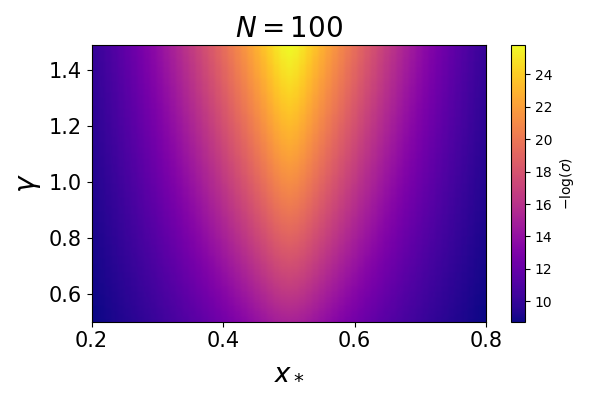}
\caption{Consider a two player game characterized by the matrix $\left(\begin{smallmatrix}1&1+2\gamma\xs\\1+2\gamma(1-\xs)&d\end{smallmatrix}\right)$. The potential is given by Eq.~(12) with $\gamma>0$ and $\xs\in(0,1)$. For $N=100$, we note that the spectral gap $\sigma=1-\lambda$, where $\lambda$ is the subdominant eigenvalue of the Moran transition matrix $\mathbf{M}$, is smaller close to $\xs=1/2$ (horizontal axis), indicating a larger typical existence time for the quasistationary distribution when the potential $V$ is close to symmetric. The vertical axis indicates the value of $\gamma$, showing that for a shallow potential the value of $\sigma$ has a weaker dependence on $\xs$.}
\label{fig:spectralgap}
\end{center}
\end{figure}

We cannot finish this work without comments on the set of examples with zero $V$-gap. As it is shown in Thm.~\ref{thm:plateau}, the condition $\Vgap{V}=0$ is critical for the existence of EMSs. However, this condition is sensitive to small perturbations, i.e., small changes of $V$ make $\Vgap{V}>0$, and Cor.~\ref{cor:asymptotic} applies, and any EMS turns into a (non-essential) metastable state that fades away in the limit $\kappa_N\to0$. More precisely, let $V$ be a generic potential such that $\Vgap{V}=0$, and with exactly two global maxima $x_\pm\in(0,1)$, $x_-<x_+$. Let $\eta$ be an arbitrarily small positive number and consider all generic potentials $\tilde V$ such that $\|\tilde V-V\|_\infty<\eta\ll1$. This set includes examples $V_1$ and $V_2$ such that $V_1(x_-)<V_1(x_+)$ and which $V_2(x_-)>V_2(x_+)$. In the former, the $\varphi_{\kappa}[V_1](x)\to 0$, while in the latter $\varphi_{\kappa}[V_2](x)\to 1$ for $x\in(x_-,x_+)$, indicating a transition between two trivial plateaux, one indicating extinction, the other indicating fixation for functions that are arbitrarily close to $V$. 

To illustrate this last point, consider a symmetric potential, derived from the 4-player game with payoffs $\mathbf{a}^{(4)}=(-5+10, 9+10, -9+10, 5+10)$, and $\mathbf{b}^{(4)}=(10,10,10,10)$. The peculiar way that these payoffs were presented stresses their symmetry. In this case $V(x)=x(1-x)(5-16x(1-x))$, which shows $x\leftrightarrow 1-x$ symmetry, i.e., $V(x)=V(1-x)$. In particular, $F_{i}=1-F_{N-i}$, and, consequently, whenever $N$ is even it is true that $F_{N/2}=1/2$. However, a naive numerical implementation of Eq.~\eqref{eq:Fk} by the authors, in both Python and Julia, shows that the exact fixation probability $F_{N/2}$ approaches zero when $N\to\infty$. On one hand, this shows that new and more robust methods are needed to calculate the fixation probability for large populations, even considering that an explicit and simple formula is available. This happens because in floating-point arithmetic, errors are present in both the estimation of critical points and in the value of $V$ on these points, making $\Vgap{V}$ positive, albeit very small. When using Eq.~\eqref{eq:Fk}, these errors accumulate through the large number of sums and products used to compute $F_{N/2}$. On the other hand, this problem does not show up using the exponential approximation given by Thm.~\ref{thm:continuousfixation}, i.e., $\varphi_{\kappa}(1/2)=1/2$, for all values of $\kappa$, if $V$ is as above. The same is true for the asymptotic approximations described in Thm.~\ref{thm:asymptotic}, as they are obtained from $\varphi_\kappa$. In the end, the asymptotic expression gives a better approximation of the exact value than a naive numerical method using the exact expression, as the former is more robust and the latter involves a large number of computations. As in the humorous dialogue in the introduction of Ref.~\cite{Brujin}, a good approximation shows to be more useful than an exact formula!

\section*{Acknowledgements}

FACCC is funded by Portuguese national funds through the FCT – Fundação para a Ciência e a Tecnologia, I.P. (Portugal), under the scope of the projects UID/297/2025 (https://doi.org/10.54499/UID/00297/2025) and UID/PRR/00297/2025 (https://doi.org/10.54499/UID/PRR/00297/2025) (Center for Mathematics and Applications – NOVA Math). FACCC also acknowledges the support of the project \emph{Mathematical Modelling of Multi-scale Control Systems: Applications to Human Diseases} (CoSysM3)  2022.03091.PTDC (https://doi.org/10.54499/2022.03091.PTDC), supported by national funds (OE), through FCT/MCTES and the COST Action 24122 mSPACE, supported by COST (European Cooperation in Science and Technology, https://www.cost.eu). MOS is supported by CAPES/BR - Finance code 01 and FAPERJ through grant E-26/210.440.2019. All authors acknowledge the support of the CAPES PRINT program at UFF through grant 88881.310210/2018-01.

AI tools were only used to edit and polish the text. The authors checked all the information and are the only ones responsible for any possible incorrectness.

\section*{Authorship and Contributorship Statement}

FACCC and MOS contributed equally to the conception, development, and writing of this manuscript and share equal responsibility for its content.

%\bibliographystyle{abbrv}
%\bibliography{References}

\appendix

\section{The replicator equation as the large population limit of the Moran process}
\label{ap:formal}

In this appendix, we show that the infinite population limit of the Moran process, for a suitably defined choice of scaling, is the Kimura equation. This proof is somehow implicit in two previous results from the authors: in~\cite{ChalubSouza:TPB2009} we used a more \emph{ad hoc} approach, identifying the Moran process away of the absorbing states as a numerical discretisation of the Kimura equation over $(0,1)$ and connecting the Dirac's with the help of theory in \cite{ChalubSouza:CMS2009}. A more systematic approach appeared in~\cite{Chalubsouza:JMB2014}, using a weak-like formulation for a Markov process, but for the case of the Wright-Fisher process. Here we present a proof using the weak formulation approach used in \cite{Chalubsouza:JMB2014}. We also point out that the results in \cite{Chalubsouza:JMB2014} show that, for short times, the initial dynamics of the Kimura equation (and hence of the Wright-Fisher and Moran processes) for an initial probability distribution of states is concentrated around a single, well-defined state, which corresponds to the characteristics of the replicator equation.

Consider the fitness functions given by 
\[
\Psi^{(\X)}(xN)=1+\left(\Delta t\right)^{\nu}\psi^{(\X)}(x)\ ,
\]
where $\X=\A,\B$, and $x\in\{0,1,\dots,N\}$. Note that, following our previous works, and to motivate the use of scaling, we kept an arbitrary time-step $\Delta t$ for now. Using the definition of the fitness potential given by Eq.~\eqref{eq:def_V}, we find that
\begin{equation*}
 \frac{x\Psi^{(\A)}(xN)}{x\Psi^{(\A)}(xN)+(1-x)\Psi^{(\B)}(xN)}=x\left[1-\left(\Delta t\right)^\nu(1-x)V'(x)+\smallo{(\Delta t)^\nu}\right]\ .
\end{equation*}
For the transition matrix, we find, up to $\smallo{(\Delta t)^\nu}$, using $x=i/N$,
\begin{align*}
&M_{i,i-1}=x(1-x)\left(1+(\Delta t)^\nu xV'(x)\right)\ ,\\
&M_{i,i+1}=x(1-x)\left(1-(\Delta t)^\nu (1-x)V'(x)\right)\ .
\end{align*}

Define $\mathbf{p}(t)=(p_0(t),\dots,p_N(t))$, where $p_i(t)$ is the probabilitly to find the system at state $i$ at time $t$. The Moran process evolves according to the Master equation $\mathbf{p}(t+\Delta t)=\mathbf{Mp}(t)$. We introduce the notation $\mathbb{N}_N=\{0,N^{-1},2N^{-1},\dots,N\}$, and $\mathbb{T}_\ell=\{\ell\Delta t, \ell\in\mathbb{N}\cup\{0\}\}$.

We introduce the notation $z\bydef\frac{1}{N}$.
For $i=xN$ and after multiplication by a test function $q\in C^\infty([0,1]\times(0,\infty))$ and sum over all possible states, we find
 \begin{align*}
  \sum_{x\in N^{-1}\N_{N+1},t\in\T_\ell}M_{i-1,i}p(x-z,t)q(x,t)
&  =\sum_{x\in N^{-1}\N_{N+1},t\in\T_\ell}M_{i,i+1}p(x,t)q(x+z,t)\ ,\\
  \sum_{x\in N^{-1}\N_{N+1},t\in\T_\ell}M_{i+1,i}p(x+z,t)q(x,t)
&  =\sum_{x\in N^{-1}\N_{N+1},t\in\T_\ell}M_{i,i-1}p(x,t)q(x-z,t)\ ,
 \end{align*}
We assume $M_{ij}=0$ if $i,j<0$ or $i,j>N$. 

Adding over all possible times:, we find
\begin{align}
\label{ad:eq:prod_int_fraco}
& \sum_{x\in N^{-1}\N_{N+1},t\in\T_\ell}p(x,t+\Delta t)q(x,t)\\
\nonumber
&\quad=\sum_{x\in N^{-1}\N_{N+1},t\in\T_\ell}\left[ M_{i-1,i}p(x-z,t)+M_{i,i}p(x,t)+M_{i+1,i}p(x+z,t), q(x,t)\right]\\
\nonumber
&\quad=\sum_{x\in N^{-1}\N_{N+1},t\in\T_\ell}p(x,t)\left\{M_{i,i+1}q(x+z,t)+M_{i,i}q(x,t)+M_{i,i-1}q(x-z,t)\right\}\\
\nonumber
&\quad=\sum_{x\in N^{-1}\N_{N+1},t\in\T_\ell}p(x,t)\Bigl\{\underbrace{\left(M_{i,i+1}+M_{i,i}+M_{i,i-1}\right)}_{=1}q(x,t)\\
\nonumber
&\qquad+z\underbrace{\left(M_{i,i+1}-M_{i,i-1}\right)}_{=-x(1-x)V'(x)(\Delta t)^\nu+\smallo{(\Delta t)^\nu}}\partial_xq(x,t)+\frac{z^2}{2}\underbrace{\left(M_{i,i+1}+M_{i,i-1}\right)}_{=x(1-x)+\smallo{1}}\partial_x^2q(x,t)\Bigr\}+\smallo{z^3}\ .
\end{align}

We gather all results in Equation~\eqref{ad:eq:prod_int_fraco} and define the operation. 
\[
\mathcal{T}_{\Delta t}p(x,t)\bydef p(x,t+\Delta t)\ .
\]
For any test function $q$, it is true that
\begin{equation*}
\langle\mathcal{T}_{\Delta t} p,q\rangle\\
 =\left\langle p,q-z\left(\Delta t\right)^\nu x(1-x)V'(x)\partial_xq+\frac{z^2}{2}x(1-x)\partial_x^2q\right\rangle\\
 +\smallo{z^2,z\left(\Delta t\right)^\nu}\ .
\end{equation*}
We used the notation $\smallo{a,b}=\smallo{a}+\smallo{b}$.

Using 
$\langle\mathcal{T}_{\Delta t} p,q\rangle=\langle p,\mathcal{T}_{-\Delta t}q\rangle$\,
we conclude 
\begin{equation*}
 \left\langle p,\frac{\mathcal{T}_{-\Delta t}q-q}{\Delta t}\right\rangle=\left\langle p,-z\left(\Delta t\right)^{\nu-1} x(1-x)V'(x)\partial_xq+z^2\left(\Delta t\right)^{-1}x(1-x)\partial_x^2q\right\rangle+\smallo{z^2\left(\Delta t\right)^{-1},z\left(\Delta t\right)^{\nu-1}}\ .
\end{equation*}

Using the scaling relationship $N^{-1}=z=\frac{\varepsilon}{2}\left(\Delta t\right)^{\mu}$, we conclude, up to order
$\smallo{\left(\Delta t\right)^{2\mu-1},\left(\Delta t\right)^{\mu+\nu-1}}$, it is not difficult to verify that
\begin{equation*}
 \left\langle p,\frac{\mathcal{T}_{-\Delta t}q-q}{\Delta t}\right\rangle=\Biggl\langle p,-\frac{\varepsilon}{2}\left(\Delta t\right)^{\mu+\nu-1}x(1-x)V'(x)\partial_xq+\frac{\varepsilon^2}{4}\left(\Delta t\right)^{2\mu-1}x(1-x)\partial_x^2q\Biggr\rangle\ .
\end{equation*}

We finally fix values for $\mu$, $\nu$ and consider the limit $N\to\infty$, $\Delta t\to0$, and changing the time scale $t\to \frac{\varepsilon t}{2}$, we find different limit equations, for $t>0$:
\begin{enumerate}
 \item If $\mu+\nu=1$ e $\mu>\frac{1}{2}$:
 \begin{equation}\label{ad:eq:replicador_pde_weak}
-\langle p,\partial_t q\rangle=-\left\langle p,x(1-x)V'(x)\partial_x q\right\rangle\ .
 \end{equation}
 \item If $\mu=\frac{1}{2}$ e $\nu>\frac{1}{2}$:
 \begin{equation}\label{ad:eq:pure_diffusion_weak}
-\langle p,\partial_t q\rangle=\frac{\varepsilon}{2}\left\langle p,x(1-x)\partial_x^2q\right\rangle\ .
 \end{equation}
\item If $\mu=\nu=\frac{1}{2}$:
\begin{equation}\label{ad:eq:Kimura_weak}
-\langle p,\partial_t q\rangle=\left\langle p,-x(1-x)V'(x)\partial_x q+\frac{\varepsilon}{2}x(1-x)\partial_x^2q\right\rangle\ .
\end{equation}
\end{enumerate}
For all other possible values of $\mu,\nu$ the resulting equation is ill-defined or trivial.

 Eq.~\eqref{ad:eq:replicador_pde_weak} corresponds to the one studied in the present work, cf. Rmk.~\ref{rmk:scalings}, and is a weak formulation of the Replicator Equation~\eqref{eq:replicator}.

Eq.~\eqref{ad:eq:Kimura_weak} is the generalization for arbitrary fitness functions of the weak formulation of the Kimura Equation, introduced in~\cite{Kimura_1958,Kimura62}. This equation was derived as the large population limit of the Moran process in~\cite{ChalubSouza:TPB2009}, and for the Wright-Fisher process in~\cite{Chalubsouza:JMB2014}.  Its rigorous mathematical study was performed in~\cite{ChalubSouza:CMS2009,ChugonovaTaranets_2023,DanilkinaSouzaChalub_2018} and reformulated as a gradient flow in~\cite{ChalubMonsaingeon:AAP2021}.

\section{Derivation of the asymptotic expression in the two-player game coordination case, Eq.~\eqref{eq:varphiC}.}
\label{ap:varphiC}

We define 
\begin{displaymath}
\II_\kappa[V](x) = \int_0^x\re^{\frac{2}{\kappa}V(s)\,\rd s}
\end{displaymath}
such that $\varphi_\kappa[V](x)=\II_\kappa[V](x)/\II_\kappa[V](1)$.

We assume that $V$ is $C^3([0,1])$ and that it has a unique maximum at $\xs$ $\in(0,1)$. We will use a slightly modified version of Laplace's method: Write

\[
V(x) = V(\xs)-|V''(\xs)|\frac{(x-\xs)^2}{2} + R_3(x),
\]
and let $\alpha=\min_{x \in [0,1]} V'''$ and $\beta=\max_{x\in[0,1]} V'''$. Let 
\begin{displaymath}
\sigma_\kappa^{-1} \bydef \sqrt{\frac{2|V''(\xs)|}{\kappa}},\quad \sigma_\kappa z\bydef x-\xs
\end{displaymath}
Then, from the Lagrange form of the remainder for Taylor series, we have
\begin{displaymath}
   \kappa \alpha z^3\leq \sigma_\kappa^{-1}R_3(\xs+\sigma_\kappa z)\leq \kappa \beta z^3.
\end{displaymath}
Therefore, we have
\begin{align*}
\II_\kappa[V](x) &= \sigma_\kappa\re^{\frac{2}{\kappa}V(\xs)}\int_{-\sigma_\kappa^{-1}\xs}^{\sigma_\kappa^{-1}(x-\xs)}\re^{-\frac{z^2}{2}}\re^{\sigma_\kappa^{-1}R_3(\xs+\sigma_\kappa z)}\,\rd z,\\
&= \sigma_\kappa\re^{\frac{2}{\kappa}V(\xs)}\left\{\int_{-\sigma_\kappa^{-1}\xs}^{\sigma_\kappa^{-1}(x-\xs)}\re^{-\frac{z^2}{2}}\,\rd z + \order{\kappa^{2}}\right\},\\
&=\sigma_\kappa\sqrt{2\pi}\re^{\frac{2}{\kappa}V(\xs)}\left\{\mathcal{N}\left(\sigma_\kappa^{-1}(x-\xs)\right)- \mathcal{N}\left(-\sigma_\kappa^{-1}\xs\right)+ \order{\kappa^{2}}\right\}.
\end{align*}
The second equality follows by expanding $\exp(\kappa \beta z^3)$ and performing repeated integration by parts.

Hence,
\begin{displaymath}
    \varphi_\kappa[V](x) = \varphi_\kappa^{(C)}[\xs,V''(\xs)](x)+\order{\kappa^2}.
\end{displaymath}

\section{Proof of Theorem~\ref{thm:asymptotic}}
\label{ap:ThmAsymptotic}

In what follows, we assume that, if $x$ is a local maximum of $V$, then $V'(x)=\order{1}$, provided that $x$ is an endpoint; otherwise, $V''(x)=\order{1}$.

We begin by proving an auxiliary lemma:

\begin{lemma}
\label{lem:aux:thm2}
    Let  $0\leq a<b\leq1$ and 
\begin{displaymath}
    \II_\kappa[V;a,b](x)\bydef \left\{
    \begin{array}{cc}
       0  & 0\leq x \leq a \\
       \int_a^x\re^{\frac{2}{\kappa}V(s)}\,\rd s  & a < x < b\\
       \int_a^b\re^{\frac{2}{\kappa}V(s)}\,\rd s  & b \leq x\leq 1.
    \end{array}
    \right.
\end{displaymath}
Then, if $V'<0$ on $[0,b]$ we have
\begin{displaymath}
    \II_\kappa[V;0,b](x)= \frac{\kappa}{2|V'(0)|}\left(1-\re^{-2|V'(0)|x/\kappa}\right) + \order{\kappa^2}
\end{displaymath}
If $V'>0$ on $[a,1]$, we have
\begin{displaymath}
    \II_\kappa[V;a,1](x)= \frac{\kappa\re^{\frac{2}{\kappa}V(1)}}{2V'(1)}\left\{\left(\re^{-2V'(1)(1-x)/\kappa}-\re^{-2V'(1)(1-a)/\kappa}\right) + \order{\kappa}\right\}.
\end{displaymath}
Finally, if $V'>0$ on $(a,\xs)$ and $V'<0$ on $(\xs,b)$, we have 
\begin{displaymath}
    \II_\kappa[V;a,b](x)=\sigma_\kappa\sqrt{2\pi}\re^{\frac{2}{\kappa}V(\xs)}\left\{\left[\mathcal{N}\left(\sigma_\kappa^{-1}(x-\xs)\right)- \mathcal{N}\left(\sigma_\kappa^{-1}(a-\xs)\right)\right]+ \order{\kappa^{2}}\right\}.
\end{displaymath}
\end{lemma}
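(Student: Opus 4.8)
The plan is to recognise all three formulas as instances of Laplace's method, organised by the location of the maximum of $V$ on the interval of integration: a left endpoint in the first case, a right endpoint in the second, and a nondegenerate interior critical point in the third.

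For the interior case I would simply reuse the computation in Appendix~\ref{ap:varphiC}. Since $V'>0$ on $(a,\xs)$ and $V'<0$ on $(\xs,b)$, $\xs$ is the unique and nondegenerate maximum of $V$ on $[a,b]$; writing $V(x)=V(\xs)-|V''(\xs)|(x-\xs)^2/2+R_3(x)$, substituting $x-\xs=\sigma_\kappa z$ with $\sigma_\kappa^{-1}=\sqrt{2|V''(\xs)|/\kappa}$, and invoking the Lagrange bound $\sigma_\kappa^{-1}R_3(\xs+\sigma_\kappa z)=\order{\kappa |z|^3}$, expansion of $\exp(\order{\kappa z^3})$ followed by repeated integration by parts against $\re^{-z^2/2}$ gives the result. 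The only change from Appendix~\ref{ap:varphiC} is that the limits of integration become $\sigma_\kappa^{-1}(a-\xs)$ and $\sigma_\kappa^{-1}(x-\xs)$ rather than $-\sigma_\kappa^{-1}\xs$ and $\sigma_\kappa^{-1}(x-\xs)$, producing the stated difference of cumulative Gaussians with error $\order{\kappa^2}$.

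For the two monotone cases I would linearise $V$ at the relevant endpoint. In the first case $V$ is decreasing on $[0,x]$ with maximum at $0$; writing $V(s)=V'(0)s+R_2(s)$, $R_2(s)=\order{s^2}$ (using $V(0)=0$ from Eq.~\eqref{eq:def_V}), the rescaling $s=\kappa u/(2|V'(0)|)$ turns $I_\kappa[V;0,b](x)$ into $\tfrac{\kappa}{2|V'(0)|}\int_0^{2|V'(0)|x/\kappa}\re^{-u}\re^{(2/\kappa)R_2(\kappa u/(2|V'(0)|))}\,\rd u$; the leading term is the elementary integral $\tfrac{\kappa}{2|V'(0)|}(1-\re^{-2|V'(0)|x/\kappa})$, and since $(2/\kappa)R_2(\kappa u/(2|V'(0)|))=\order{\kappa u^2}$ on the effective range, expanding the exponential and integrating against $\re^{-u}$ (using $\int_0^\infty u^k\re^{-u}\,\rd u<\infty$) leaves an $\order{\kappa^2}$ correction. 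The second case is the mirror image about $s=1$: factoring out $\re^{2V(1)/\kappa}$, linearising $V(s)=V(1)+V'(1)(s-1)+\order{(s-1)^2}$, and setting $\tau=1-s$ reduces $I_\kappa[V;a,b](x)$ to $\tfrac{\kappa\re^{2V(1)/\kappa}}{2V'(1)}$ times $\re^{-2V'(1)(1-x)/\kappa}-\re^{-2V'(1)(1-a)/\kappa}$, with the next order of the integration-by-parts expansion leaving an additive $\order{\kappa}$ correction inside the braces. Continuity up to $[0,1]$ then extends all three formulas to the closed interval.

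The main obstacle is uniformity in $x$ (and $a$): one must verify that bounding the Taylor remainder does not hide a dependence on the distance from $x$ to the maximising endpoint, so that the estimates hold simultaneously in the boundary-layer regime where that distance is $\order{\kappa}$ and in the bulk regime where it is $\order{1}$. I would handle this by splitting each integral at a scale $\kappa^{\theta}$ with $1/2<\theta<1$: on the inner piece the remainder produces the claimed power of $\kappa$ after rescaling, while on the outer piece the integrand is exponentially small because $V$ is there strictly below its maximum (strict monotonicity in cases one and two, nondegenerate quadratic decay in case three), so that piece is negligible. Everything else is routine bookkeeping of elementary and Gaussian integrals.
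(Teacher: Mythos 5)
Your proof takes essentially the same route as the paper's: the interior case is deferred to the computation of Appendix~\ref{ap:varphiC} with shifted limits of integration, and the two monotone cases are handled by linearising $V$ at the maximising endpoint and rescaling (your substitution $s=\kappa u/(2|V'(0)|)$ is the paper's $s=\kappa z$ up to a constant), with the Taylor remainder contributing the stated $\order{\kappa^2}$ after expanding the exponential and integrating term by term. Your explicit splitting at scale $\kappa^{\theta}$ to secure uniformity in $x$ is the standard justification of a step the paper leaves implicit; it is correct and does not change the argument.
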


\begin{proof}
        In the first case, the unique maximum of $V$  over $[0,b]$ is at $x=0$ and
\begin{displaymath}
V(x) = V'(0)x + R_2(x),
\end{displaymath}
Let $\kappa z = x$ and notice that we have
\begin{displaymath}
  \kappa  \alpha z^2\leq \kappa^{-1} R_2(\kappa z)\leq \kappa \beta z^2.
\end{displaymath}
Therefore:
\begin{align*}
    \II_\kappa[V](x)&= \kappa\int_0^{x/\kappa}\re^{-2|V'(0)|z}\re^{\kappa^{-1}R_2(\kappa z)}\,\rd z,\\
    &=\kappa\int_0^{x/\kappa}\re^{-2|V'(0)|z}\,\rd z + \order{\kappa^2},\\
    &= \frac{\kappa}{2|V'(0)|}\left(1-\re^{-2|V'(0)|x/\kappa}\right) + \order{\kappa^2},
\end{align*}
from which the result follows.

The proof for the second case is similar, and for the third case it follows, \emph{mutatus mutandis}, from the proof in Appendix~\ref{ap:varphiC}.
\end{proof}
Let
\begin{displaymath}
    \varphi_\kappa[V,a,b](x):= \frac{\II_\kappa[V,a,b](x)}{\II_\kappa[V,a,b](b)}.
\end{displaymath}
Then, we have that  $\varphi_\kappa[V,a,b]$ is non-decreasing over $[0,1]$ and strictly monotonic over $[a,b]$, with  $\varphi_\kappa[V,a,b](a)=0$ and  $\varphi_\kappa[V,a,b](b)=1$ 

\begin{lemma}
\label{thm:aux:thm2}
Let $V:[0,1]\to\R$, such that $V$ is at least $C^3([0,1])$ and that $V$ has a finite number of local maxima in $[0,1]$. Let $\mathcal{M}=\{x_0,x_1,\ldots,x_M\}$ be the set of local maxima of $V$. Moreover, assume further that
\begin{enumerate}
    \item if $x_0\not=0$, then $|x_0|\gg \sqrt{\kappa}$;
    \item if $x_m\not=1$, then $|1-x_m|\gg \sqrt{\kappa}$;
    \item $|x_k-x_{k+1}|\gg \sqrt{\kappa}$, $k=0,\ldots,m-1$.
\end{enumerate}
Furthermore, if $x_0\not=0$ and $x_m\not=1$, let $0=y_0\leq x_0<y_1<x_1<\ldots<y_m<x_m\leq y_{m+1}=1$, such that $|y_k-x_{k-1}|,|y_k-x_k|\gg\sqrt{\kappa}$, $k=1,\ldots,m$, and such that $V’$ has a unique zero in $(y_j,y_{j+1})$. If $x_0=0$, then start with $y_1$, and if $x_M=1$ then end at $y_M$. Then

\begin{equation}
\label{eq:ap:thm2}
    \varphi_\kappa[V](x) = \sum_{i=0}^M c_i \left(\varphi_\kappa[V,y_i,y_{i+1}](x)\right)^+ \land 1 + \order{\kappa}
\end{equation}
where
\begin{displaymath}
    x^+=\max(x,0),\quad a\land b= \min(a,b)\quad\text{and}\quad c_i= \frac{I_\kappa[V,y_i,y_{i+1}](y_{i+1})}{\sum_{j=0}^M I_\kappa[V,y_j,y_{j+1}](y_{j+1}) }.
\end{displaymath}
\end{lemma}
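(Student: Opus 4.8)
\textbf{Proof plan for Lemma~\ref{thm:aux:thm2}.}
The plan is to decompose the integral $I_\kappa[V](x)=\int_0^x\re^{2V(s)/\kappa}\,\rd s$ over the partition $0=y_0<y_1<\dots<y_{M+1}=1$ (adjusting the endpoints when $x_0=0$ or $x_M=1$), and to apply Lemma~\ref{lem:aux:thm2} on each subinterval $[y_i,y_{i+1}]$, on which $V$ has a unique local maximum $x_i$ of the admissible type (endpoint with $V'(x_i)=\order{1}$, or interior with $V''(x_i)=\order{1}$). First I would write, for $x\in[y_k,y_{k+1}]$,
\begin{displaymath}
I_\kappa[V](x)=\sum_{i=0}^{k-1}I_\kappa[V,y_i,y_{i+1}](y_{i+1})+I_\kappa[V,y_k,y_{k+1}](x),
\end{displaymath}
and observe that each term $I_\kappa[V,y_i,y_{i+1}](y_{i+1})$ has, by Lemma~\ref{lem:aux:thm2}, leading order $\re^{2V(x_i)/\kappa}$ times an algebraic prefactor in $\kappa$ (namely $\kappa/(2|V'(x_i)|)$ at a boundary maximum, or $\sigma_{\kappa,i}\sqrt{2\pi}$ times a Gaussian-CDF difference at an interior maximum). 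The denominator $I_\kappa[V](1)=\sum_{j=0}^M I_\kappa[V,y_j,y_{j+1}](y_{j+1})$ is the full sum of these, which is exactly $\sum_j d_j$ up to the claimed relative error; dividing gives the convex weights $c_i$.

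The key algebraic step is to show that for $x\in[y_k,y_{k+1}]$ the ratio $\varphi_\kappa[V](x)=I_\kappa[V](x)/I_\kappa[V](1)$ equals $\sum_{i<k}c_i + c_k\,\varphi_\kappa[V,y_k,y_{k+1}](x)+\order{\kappa}$, which is precisely the right-hand side of~\eqref{eq:ap:thm2}: on that interval, $(\varphi_\kappa[V,y_i,y_{i+1}](x))^+\land 1$ equals $1$ for $i<k$, equals $\varphi_\kappa[V,y_k,y_{k+1}](x)\in[0,1]$ for $i=k$ (since $\varphi_\kappa[V,y_k,y_{k+1}]$ is monotone from $0$ at $y_k$ to $1$ at $y_{k+1}$), and equals $0$ for $i>k$ because the argument is negative there. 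To make the error term honest, I would track that in each invocation of Lemma~\ref{lem:aux:thm2} the absolute error is $\re^{2V(x_i)/\kappa}\order{\kappa^2}$ (or $\order{\kappa}$ in the monotone non-boundary-maximum case), so after dividing by $I_\kappa[V](1)$ — whose leading term is $\re^{2(\max_i V(x_i))/\kappa}$ times a prefactor of order at least $\kappa$ or $\sqrt{\kappa}$ — the relative error is at worst $\order{\kappa}$; here the separation hypotheses $|x_k-x_{k+1}|\gg\sqrt{\kappa}$, $|x_0|\gg\sqrt\kappa$, $|1-x_M|\gg\sqrt\kappa$ are what guarantee that the Gaussian-CDF differences $\mathcal N(\sigma_{\kappa,i}^{-1}(y_{i+1}-x_i))-\mathcal N(\sigma_{\kappa,i}^{-1}(y_i-x_i))$ are bounded below away from $0$, so no prefactor is anomalously small.

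The main obstacle is precisely this uniform-in-$\kappa$ control of the denominator and of the subinterval prefactors: one must verify that the choice of the separating points $y_k$ (interior minima of $V$, or any points with the stated $\sqrt\kappa$-separation) makes every $I_\kappa[V,y_i,y_{i+1}](y_{i+1})$ genuinely of the form $(\text{prefactor bounded above and below})\cdot\re^{2V(x_i)/\kappa}$, and that the tails of the Gaussian integrals cut off at $y_i,y_{i+1}$ (rather than at the true endpoints $0,1$) contribute only exponentially small corrections — this is where the hypotheses $|y_k-x_{k-1}|,|y_k-x_k|\gg\sqrt\kappa$ are used, via $\mathcal N(-t)=\order{\re^{-t^2/2}}$ for $t\gg 1$. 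A secondary bookkeeping point is the boundary cases $x_0=0$ and $x_M=1$, where the partition starts at $y_1$ (resp.\ ends at $y_M$) and the first (resp.\ last) block is handled by the $V'<0$ on $[0,b]$ (resp.\ $V'>0$ on $[a,1]$) branch of Lemma~\ref{lem:aux:thm2}; these require only notational adaptation. Once the prefactor bounds are in place, the identification of $c_i$ with $d_i/\sum d_j$ is immediate from the explicit formulas in Lemma~\ref{lem:aux:thm2}, and the statement follows.
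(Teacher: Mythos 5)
Your proposal is correct and follows essentially the same route as the paper: decompose $I_\kappa[V]$ over the partition $\{y_i\}$, identify the partial sums with $\sum_{j<k}c_j$ and the local ratio with $c_k\varphi_\kappa[V,y_k,y_{k+1}](x)$, and use the monotonicity of $\varphi_\kappa[V,y_i,y_{i+1}]$ together with its endpoint values to justify the $(\cdot)^+\land 1$ truncation. The additional error bookkeeping you describe (prefactor bounds from Lemma~\ref{lem:aux:thm2} and the $\sqrt{\kappa}$-separation hypotheses) is sound but is really only needed in the subsequent step where $c_i$ is replaced by $d_i$; for this lemma the decomposition identities are exact.
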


\begin{proof}
The proof follows by noticing that 
\begin{align*}
    \left.\varphi_\kappa[V](x)\right|_{[y_i,y_{i+1}]}&= \varphi_\kappa[V](y_i)+ \frac{I_\kappa[V,y_i,y_{i+1}](x)}{\sum_{j=0}^MI_\kappa[V,y_j,y_{j+1}](y_{j+1})}\\
    &= \sum_{j=0}^{i-1}c_j + c_i \varphi_\kappa[V,y_i,y_{i+1}] + \order{\kappa}\\
    &= \sum_{j=0}^i c_i \left(\varphi_\kappa[V,y_i,y_{i+1}](x)\right)^+ \land 1 + \order{\kappa}.
\end{align*}
Above, the  first equality comes from $I_\kappa[V,0,1]=\sum_{j=0}^mI_\kappa[V,y_j,y_{j+1}](y_{j+1})$, the second equality follows from Lemma~\ref{lem:aux:thm2} and the third equality comes from the monotonicity of $\varphi_\kappa[V,a,b]$ together with its attained values at $x=a,b$. 
\end{proof}

\begin{proof}[Proof of Theorem~\ref{thm:asymptotic}]
    The proof follows from Lemma~\ref{thm:aux:thm2}, given that $M=\order{1}$. In this case, we can replace $\left(\varphi_\kappa[V,y_i,y_{i+1}](x)\right)^+ \land 1 $ by the appropriate $\varphi_\kappa^{(\X})$, where $\X=\A,\B$ or $C$,  with an error $M\order{\kappa} = \order{\kappa}$ and also replace $c_i$ by $d_i$ with a similar estimate. 
\end{proof}

\begin{remark}
    While Thm.~\ref{thm:asymptotic} yields a more natural representation of the fixation probability, it does require tighter estimates of the integral approximations.  Also note that, under an additional assumption of asymptotic separability of all local extrema of $V$ over $[0,1]$, the $y_i$ can be chosen to be the local minima of $V$. 
\end{remark}
\end{document}